\numberwithin{equation}{section}
\theoremstyle{plain}
\newtheorem{thm}{Theorem}[section]
\newtheorem{lemma}[thm]{Lemma}
\newtheorem{theorem}[thm]{Theorem}
\newtheorem{remark}[thm]{Remark}
\newtheorem{corollary}[thm]{Corollary}
\newcommand{\RSS}{\mathrm{RSS}}
\newcommand{\df}{\mathrm{df}}
\newcommand{\err}{\mathrm{err}}
\newcommand{\E}{\mathbb{E}}
\newcommand{\eqd}{\overset{d}{=}}
\newcommand{\toas}{\overset{\mathit{a.s.}}{\to}}
\newcommand{\Normal}{\mathcal{N}}
\newcommand{\reals}{\mathbb{R}}
\newcommand{\trans}{\mathrm{T}}
\newcommand{\vectorsymbol}{}
\newcommand{\vzero}{\vectorsymbol{0}}
\newcommand{\ve}{\vectorsymbol{e}}
\newcommand{\vs}{\vectorsymbol{s}}
\newcommand{\vt}{\vectorsymbol{t}}
\newcommand{\vv}{\vectorsymbol{v}}
\newcommand{\vectorhatsymbol}[1]{{\hat{#1}}}
\newcommand{\vhv}{\vectorhatsymbol{v}}
\newcommand{\vectortildesymbol}[1]{{\tilde{#1}}}
\newcommand{\vtv}{\vectortildesymbol{v}}
\newcommand{\matrixsymbol}{}
\newcommand{\mzero}{\matrixsymbol{0}}
\newcommand{\mone}{\matrixsymbol{1}}
\newcommand{\mA}{\matrixsymbol{A}}
\newcommand{\mB}{\matrixsymbol{B}}
\newcommand{\mC}{\matrixsymbol{C}}
\newcommand{\mD}{\matrixsymbol{D}}
\newcommand{\mE}{\matrixsymbol{E}}
\newcommand{\mH}{\matrixsymbol{H}}
\newcommand{\mI}{\matrixsymbol{I}}
\newcommand{\mO}{\matrixsymbol{O}}
\newcommand{\mP}{\matrixsymbol{P}}
\newcommand{\mQ}{\matrixsymbol{Q}}
\newcommand{\mR}{\matrixsymbol{R}}
\newcommand{\mS}{\matrixsymbol{S}}
\newcommand{\mU}{\matrixsymbol{U}}
\newcommand{\mV}{\matrixsymbol{V}}
\newcommand{\mX}{\matrixsymbol{X}}
\newcommand{\mY}{\matrixsymbol{Y}}
\newcommand{\mZ}{\matrixsymbol{Z}}
\newcommand{\mGamma}{\matrixsymbol{\Gamma}}
\newcommand{\mSigma}{\matrixsymbol{\Sigma}}
\newcommand{\matrixtildesymbol}[1]{{\tilde{#1}}}
\newcommand{\mtV}{\matrixtildesymbol{V}}
\newcommand{\matrixhatsymbol}[1]{{\hat{#1}}}
\newcommand{\mhA}{\matrixhatsymbol{A}}
\newcommand{\mhB}{\matrixhatsymbol{B}}
\newcommand{\mhD}{\matrixhatsymbol{D}}
\newcommand{\mhE}{\matrixhatsymbol{E}}
\newcommand{\mhU}{\matrixhatsymbol{U}}
\newcommand{\mhV}{\matrixhatsymbol{V}}
\newcommand{\mhY}{\matrixhatsymbol{Y}}
\newcommand{\mhGamma}{\matrixhatsymbol{\Gamma}}
\numberwithin{equation}{section}
\begin{document}

\begin{frontmatter}

\title{Degrees of freedom for combining regression with factor analysis }
\runtitle{Degrees of freedom}

\author{\fnms{Patrick} \snm{O. Perry}\ead[label=e1]{pperry@stern.nyu.edu}\thanksref{t1}}, 
\author{\fnms{Natesh} \snm{Pillai}\ead[label=e2]{pillai@fas.harvard.edu}\thanksref{t2}}
\affiliation{New York University\thanksmark{t1}}
\affiliation{Harvard University\thanksmark{t2}}


\runauthor{Perry and Pillai}

\begin{abstract}

In the AGEMAP genomics study, researchers were interested in detecting genes
related to age in a variety of tissue types.  After not finding many
age-related genes in some of the analyzed tissue types, the study was
criticized for having low power \citep{Land:08}.  It is possible that the low
power is due to the presence of important unmeasured variables, and indeed we
find that a latent factor model appears to explain substantial variability not
captured by measured covariates.  We propose including the estimated latent
factors in a multiple regression model.  The key difficulty in doing so is
assigning appropriate degrees of freedom to the estimated factors to obtain
unbiased error variance estimators and enable valid hypothesis testing.  When
the number of responses is large relative to the sample size, treating the
estimated factors like observed covariates leads to a downward bias in the
variance estimates.  Many \emph{ad-hoc} solutions to this problem have been
proposed in the literature without the backup of a careful theoretical
analysis.  Using recent results from random matrix theory, we derive a simple,
easy to use expression for degrees of freedom. Our estimate gives a principled
alternative to \emph{ad-hoc} approaches in common use. Extensive simulation
results show excellent agreement between the proposed estimator and its
theoretical value.  Applying our methodology to the AGEMAP genomics study, we
found an order of magnitude increase in the number of significant genes.
Although we focus on the AGEMAP study, the methods developed in this paper are
widely applicable to other multivariate models, and thus are of independent
interest.

\end{abstract}


\begin{keyword}
\kwd{Applied Factor Analysis}
\kwd{Degrees of Freedom}
\kwd{Least Squares}
\kwd{Multiple Regression}
\kwd{Random Matrix Theory}
\kwd{Phase Transition}
\kwd{Eigenvalue}
\end{keyword}

\end{frontmatter}
\section{Introduction}\label{S:introduction}
In the AGEMAP genomics study of $M \approx 18,000$ genes measured in $N = 39$ subjects
(mice), researchers are interested in detecting which genes are related to
age~\citep{Zahn:07}.  For each subject-gene pair $ij$, with $1 \leq i \leq N$
and $1 \leq j \leq M$, they measure $y_{ij}$, the log-activation in subject
$i$ of gene $j$; taken together, these measurements form a response
matrix $\mY = [y_{ij}] \in \reals^{N \times M}$.

The researchers have two covariate matrices available.  The row
covariate matrix, $\mX = [x_{ik}] \in \reals^{N \times p}$, encodes
subject-specific attributes.   This matrix has $p = 3$ columns, for an
intercept, the sex, and the age of the subject:
\begin{align*}
  x_{i1} &= 1, \\
  x_{i2} &= \text{Sex of subject $i$ (Female = +1, Male = -1),} \\
  x_{i3} &= \text{Age of subject $i$ (months).} 
\end{align*}
The column covariate matrix, $\mZ = [z_{jl}] \in \reals^{M \times q}$, encodes
response-specific attributes.  This matrix has $q = 2$ columns, for an
intercept and the tissue type of the response:
\begin{align*}
  z_{j1} &= 1, \\
  z_{j2} &= \text{Tissue of response $j$ (Cerebellum = +1, Cerebrum = -1).}
\end{align*}

To model the associations between the covariates and the response, it is
natural to posit existence of row and column coefficient matrices
$\mA = [\alpha_{il}] \in \reals^{N \times q}$ 
and
$\mB = [\beta_{jk}] \in \reals^{M \times p}$ 
which link the covariates to the response via the relation
\[
  \mY = \mX \mB^\trans + \mA \mZ^\trans + \mE,
\]
where $\mE = [\varepsilon_{ij}]\in \reals^{n \times m}$
is a matrix of mean-zero random errors.  The interpretation of $\beta_{j3}$ is
as follows: ``holding sex and subject-specific effects constant, increasing
age by 1 unit (1 month) is associated with increasing expected log activation of gene j
by $\beta_{j3}$ units.''

%

For the AGEMAP study, we would like to answer the question: `` is gene $j$ associated with age if $\beta_{j3}$ is nonzero''.  However, individual components of
$\beta$ are not identifiable, so this is not a workable definition.  Instead, we will say
that gene $j$ is related to age if the age coefficient for that gene differs
from the average age coefficient for all genes of the same tissue type.  More
precisely, we say that gene $j$ is related to age if
\(
  [\mB^\trans \vs^{(j)}]_3 = 0,
\)
where
\(
  \vs^{(j)} = (\mI - \mH_{\mZ}) \ve_j,
\)
with $\mI$ the identity matrix,
$\ve_j$ the $j$\textsuperscript{th} standard basis vector in $\reals^{M}$
and $\mH_{\mZ} = \mZ (\mZ^\trans \mZ)^{-1} \mZ^\trans$.  Alternative
definitions are possible by using weighted versions of the hat matrix
$\mH_{\mZ}$.

Following \citet{Gabr:78}, we estimate the identifiable components of the
coefficient matrices via least squares.  We choose estimates $\mhA$ and
$\mhB$ to satisfy
\begin{align*}
  (\mI - \mH_{\mX}) \mhA &= (\mI - \mH_{\mX}) \mY \mZ (\mZ^\trans \mZ)^{-1}, \\
  (\mI - \mH_{\mZ}) \mhB &= (\mI - \mH_{\mZ}) \mY^\trans \mX (\mX^\trans \mX)^{-1}.
\end{align*}
That is, we find the identifiable components of $\mhA$ by regressing on the
row residuals from a column regression of $\mY$ on $\mX$; we find the
identifiable components of $\mhB$ by regressing on the column residuals from a
row regression of $\mY$ on $\mZ$.
Letting
\(
  \mhY = \mhA \mZ^\trans + \mX \mhB^\trans,
\)
the unidentifiable components can be chosen arbitrarily such that
\[
  \mhE
    \equiv
    \mY - \mhY
    = (\mI - \mH_{X}) \mY (\mI - \mH_{Z});
\]
one possibility is to take
\(
  \mH_{\mZ} \mhB = \mH_{\mZ} \mY^\trans \mX (\mX^\trans \mX)^{-1}
\)
and
\(
  \mH_{\mX} \mhA = \mzero.
\)
When the estimates are chosen in this manner,  it is easy to show the following. 
%
If $\mX$ is full rank and
\(
  \mY = \mA \mZ^\trans + \mX \mB^\trans + \mE,
\)
where the rows of $\mE$ are independent mean-zero multivariate normal random
vectors with covariance matrix $\mSigma$, then for any $\vs$ is any vector satisfying $\mZ^\trans \vs = \vzero$, the quantities
$\mhB^\trans \vs$ and $\vs^\trans \mhE^\trans \mhE \vs$ are independent with
\begin{align}
  \mhB^\trans \vs
    &\sim
      \Normal\!\big(\mB^\trans \vs, \, \vs^\trans \mSigma \vs \cdot (\mX^\trans \mX)^{-1}\big), \label{eqn:tdist1} \\
  \vs^\trans \mhE^\trans \mhE \vs
    &\sim
      \vs^\trans \mSigma \vs \cdot \chi^2_{N - p}.\label{eqn:tdist2}
\end{align}
%
The main implication of Equations \eqref{eqn:tdist1} and \eqref{eqn:tdist2} is that, if
$\vt$ is any vector and
$\vs^\trans \mB \vt = 0$, then the test statistic
\[
  T(\vs, \vt)
  \equiv
  \frac{
    \sqrt{N - p} \cdot \vs^\trans \mhB \vt
  }{
    \{
      \vs^\trans \mhE \mhE^\trans \vs
      \cdot
      \vt^\trans (\mX^\trans \mX)^{-1} \vt
    \}^{1/2}
  }
\]
is $t$-distributed with $N - p$ degrees of freedom.  This facilitates
hypothesis testing on the components of $\mB^\trans \vs$.

\subsection{The problem}

With all this machinery in place, suppose that we want to test whether a
particular gene, Mm.71015 (Cerebellum) is related to age.  First, we
fit $\mhY = \mhA \mZ^\trans + \mX \mhB^\trans$ via least squares, and we
calculate the residuals $\mhE = \mY - \mhY$.  We set
$\vs = (\mI - \mH_{\mZ}) \ve_j$, where $j$ is the index of
Mm.71015 (Cerebellum).  The estimate (standard error) of the age component
of $\mhB^\trans \vs$ is 0.018 (0.014); the $T$ statistic is 1.36, with 36
degrees of freedom.  Apparently, the gene is not significantly related to
age.

\begin{figure}
\centering
\includegraphics[scale = 0.7]{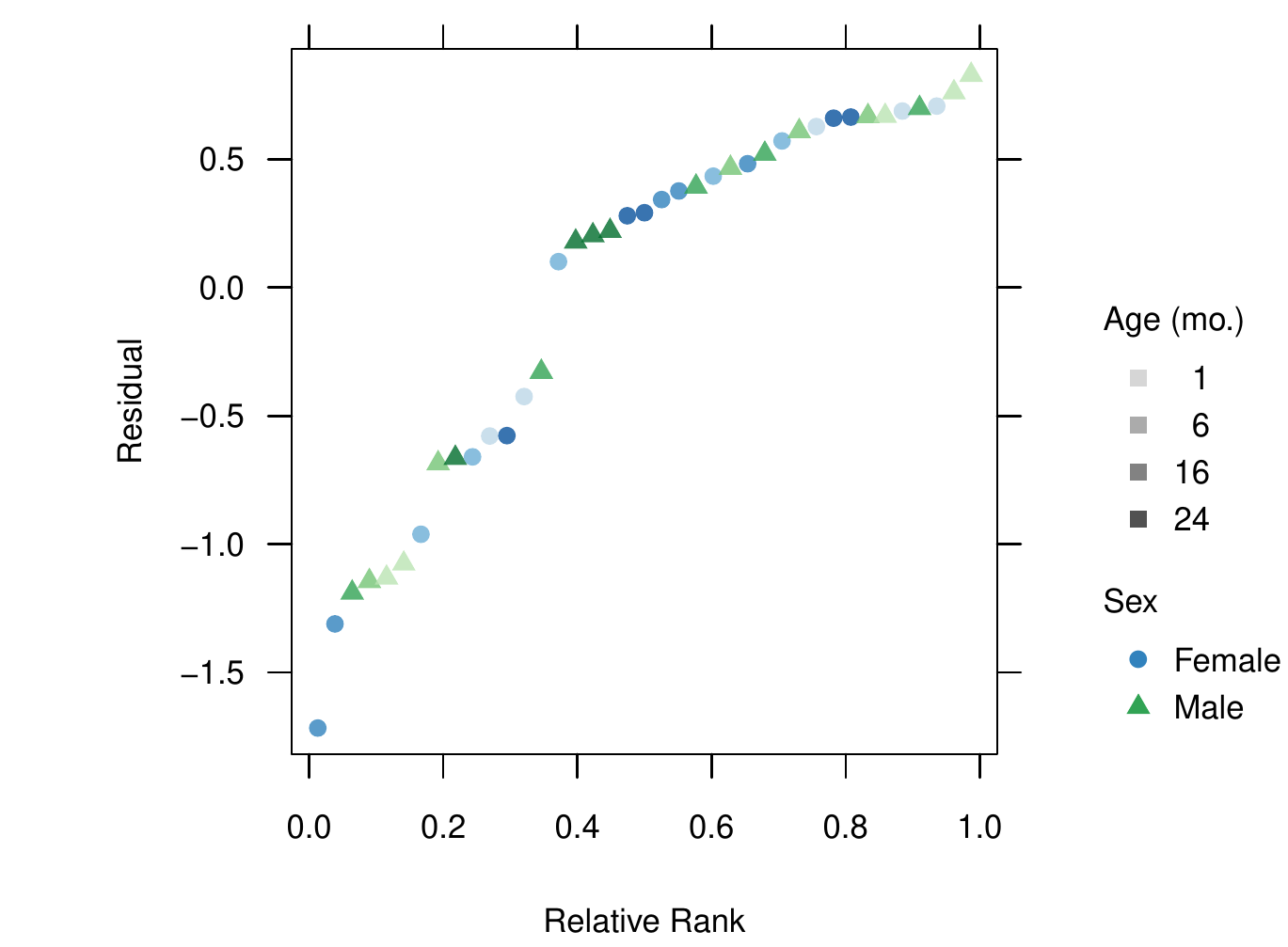}
\caption{\textsc{Residuals Reveal Latent Structure.}
Residuals from the regression of log activation on sex and age in
gene Mm.71015 (Cerebellum).  Two clusters of subjects are apparent.
}\label{fig:agemap-1gene-resid}
\end{figure}

Visually inspecting the elements of the residual component $\mhE \vs$ reveals
a problem with the modeling assumptions (Figure~\ref{fig:agemap-1gene-resid}).
Specifically, our analysis relies on the elements of the regression error
component $\mE \vs$ being independent mean-zero normal random variables.  As
evidenced by the multi-modal structure in the residuals, the distributional
assumptions on the regression errors seem implausible.

An analysis of all $M$ genes further corroborates the evidence of latent
structure in the residual matrix $\mhE$.  If the model were correctly
specified, then there should be no apparent row-specific structure in the
residual matrix.  However, as Figure~\ref{fig:agemap-pc} demonstrates,
there are clear clusters in the first two principal component scores
computed from $\mhE$.  One cluster of subjects exhibits low response values
across many Cerebrum tissue genes, another cluster exhibits low response values
across many Cerebellum tissue genes, and the remaining cluster has medial
responses for most genes, regardless of tissue
type~(Figure~\ref{fig:agemap-scatter-grid}).

\begin{figure}
\centering
\includegraphics[scale = 0.7]{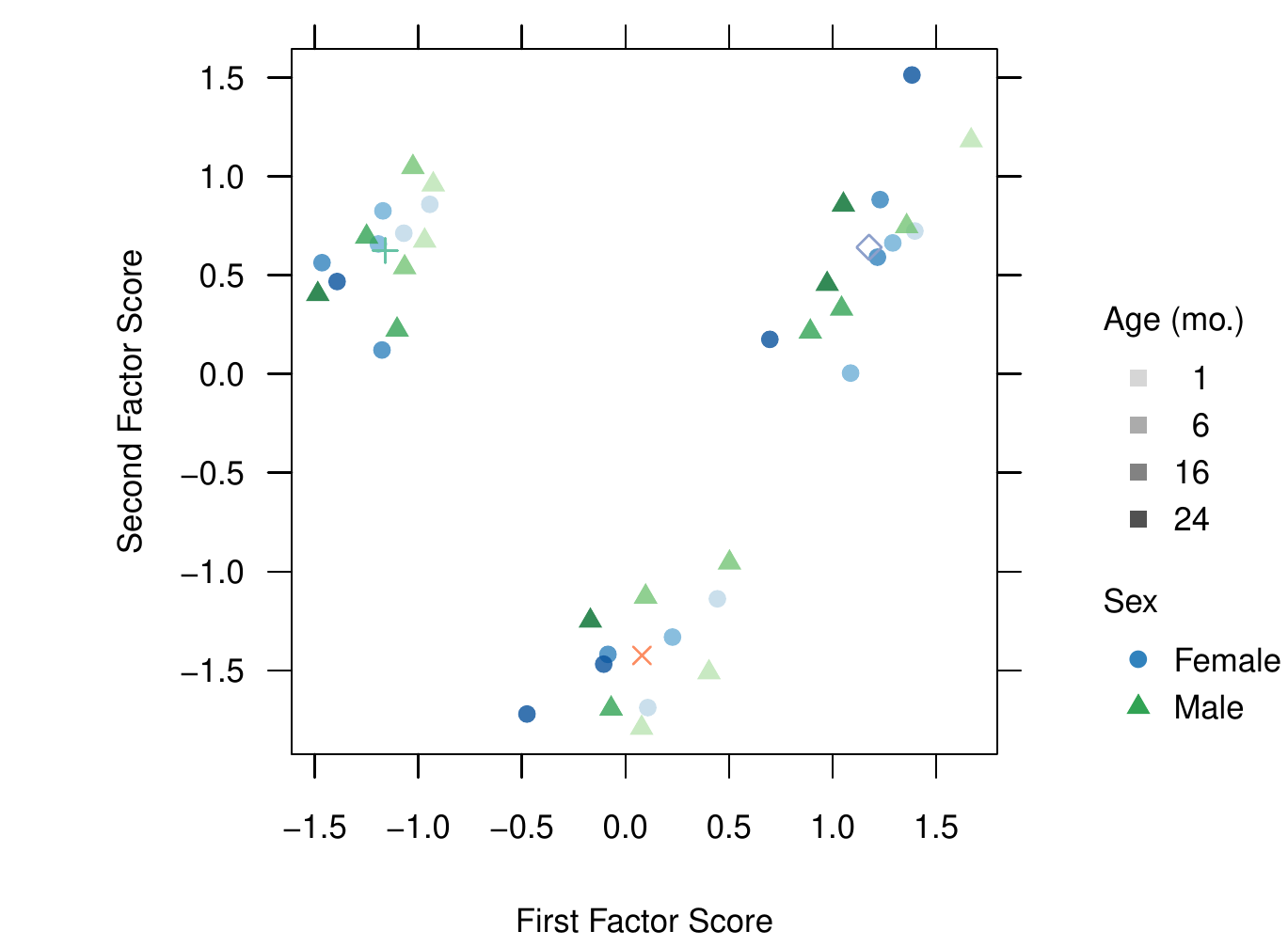}
\caption{\textsc{Residual Matrix Reveals Latent Structure.}
First two principal component scores computed from the residuals after
regressing gene response on age and gender.  Three clusters of individuals
are apparent.
}\label{fig:agemap-pc}
\end{figure}

\begin{figure}
\centering
\includegraphics[scale = 0.75]{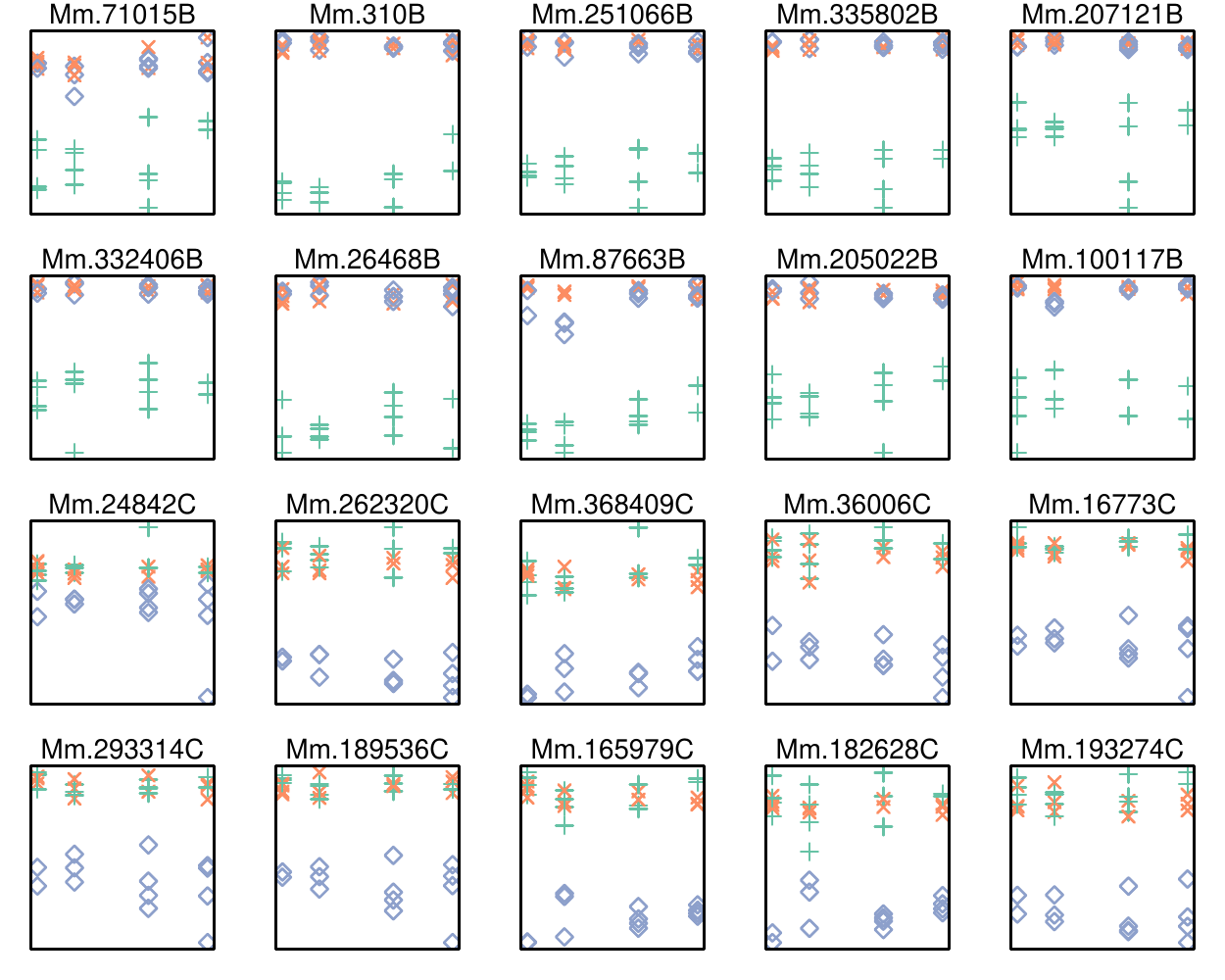}
\caption{\textsc{Mouse Clusters Exhibit Different Response Behaviors.}
Scatterplots of age versus log gene activation for ten Cerebrum genes (B) and
ten Cerebellum genes (C), with subject colors determined from the clusters
identified in Figure~\ref{fig:agemap-pc}.
}\label{fig:agemap-scatter-grid}
\end{figure}

The principal components analysis of the residual matrix hints at the
existence of latent subject-specific covariates.  It is likely that there is
some $N \times r$ matrix $\mU$ of unobserved subject-specific covariates, and
an $M \times r$ matrix $\mV$ of coefficients such that
\[
  \mY = \mA \mZ^\trans + \mX \mB^\trans + \mU \mV^\trans + \mE.
\]

To make the model identifiable, we require that $\mU^\trans \mX = \mzero$ and
$\mV^\trans \mZ = \mzero$.  Without the identifiability assumption, the least
squares estimates of $(\mI - \mH_{\mX}) \mA$ and $(\mI - \mH_{\mZ}) \mB$ will
be biased by $(\mI - \mH_{\mX}) \mU \mV^\trans \mZ (\mZ^\trans \mZ)^{-1}$ and
$(\mI - \mH_{\mZ}) \mV \mU^\trans \mX (\mX^\trans \mX)^{-1}$.  In fact, since
we never perform inference on $\mA$, the identifiability assumption on
$\mV$ is inconsequential.  The constraint $\mU^\trans \mX = \mzero$, amounts
to a requirement that the latent subject-specific covariates be uncorrelated
with the columns of $\mX$.  Even though the identifiability assumption seems
strong, making this assumption is \emph{less restrictive} than assuming that
$r = 0$ (that is, assuming that there are no latent factors which are
correlated with the response).

With the estimates $\mA$ and $\mB$ the same as in the case with
no latent factors, the least squares estimates
of $\mU$ and $\mV$ can be obtained from the leading $r$ terms of the singular
value decomposition of the residual matrix $\mhE$.  With estimated
latent factors having scores $\mhU$ and loadings
$\mhV$, this gives an adjusted residual matrix
$\mhE_{1} = \mhE - \mhU \mhV^\trans$.  Forming the adjusted residual matrix in
this way is equivalent to treating $\mhU$ like observed row covariates.

In the more general latent factor model, to test whether gene $j$ is related
to age after adjusting for observed and latent mouse-specific covariates, we
can base a test on the estimate $\mhB^\trans \vs$, which, even in the presence
of latent factors, is distributed according to~\eqref{eqn:tdist1}.
Unfortunately, such a test requires an estimate of the variance component
$\vs^\trans \mSigma \vs$, which is not readily available; with latent
factors,~\eqref{eqn:tdist2} no longer holds.

To estimate the variance component from the residual matrix $\mhE_1$, we need
to know how many ``degrees of freedom'' are associated with estimating and
adjusting for the latent factor term $\mU \mV^\trans$.  Intuitively, if $\mU$
were known, so that $\mhU$ was equal to $\mU$, then $\E[\vs^\trans \mhE_1^\trans \mhE_1 \vs]$
would be equal to $(N - p - r) \vs^\trans \mSigma \vs$; adjusting for the
factor $\mU \mV^\trans$ would take $r$ degrees of freedom.  In the general
case, when $\mU$ is estimated from data, we define the degrees of
freedom to be the quantity $\df(\vs)$ satisfying the equation
\begin{equation}
  \E[\vs^\trans \mhE_1^\trans \mhE_1 \vs]
  =
  \{ N - p - \mathrm{df}(\vs) \}
  \cdot
  \vs^\trans \mSigma \vs.
\end{equation}
If we knew $\df(\vs)$, then we could get an unbiased estimate of $\vs^\trans
\mSigma \vs$, specifically the quantity
\begin{equation}
  \hat \sigma^2(\vs)
  =
  \frac{
    \vs^\trans \mhE_1^\trans \mhE_1 \vs
  }{
    N - p - \mathrm{df}(\vs)
  }.
\end{equation}
This would facilitate a test on the age component of $\mB^\trans \vs$.



\subsection{Previous work}\label{sec:pw}
Multivariate response data, like the AGEMAP study, is prevalent in diverse applications ranging from
agriculture to econometrics to psychology \citep{BG96,EK95, SL01, SW99, Zahn:07}.  In these
applications, the response variate can be conceived of as a matrix,
$\mY = [y_{ij}] \in \reals^{n \times m}$; the goal is to explain the
variability in the response, and to uncover the relationship between $\mY$ and
observed covariates.

Given row and column covariate matrices
$\mX = [x_{ik}] \in \reals^{n \times p}$ and
$\mZ = [z_{jk}] \in \reals^{m \times q}$,
one natural model linking the covariates to the response is
\begin{equation}\label{E:regression-model}
  \mY = \mX \mB^\trans + \mA \mZ^\trans + \mE,
\end{equation}
where
$\mB = [\beta_{jk}] \in \reals^{m \times p}$
and
$\mA = [\alpha_{ik}] \in \reals^{n \times q}$
are
unknown coefficient matrices and
$\mE = [ \varepsilon_{ij} ] \in \reals^{n \times m}$
is a matrix of random errors.  The full coefficient matrices are
not identifiable, as can be seen by the identity
\(
  \mX \mB^\trans + \mA \mZ^\trans
  =
  \mX (\mB + \mZ \mC)^\trans + (\mA - \mX \mC^\trans) \mZ^\trans.
\)
However, for any
vector $\vs$ orthogonal to the column covariates ($\mZ^\trans \vs = 0$) it is
possible to identify $\mB^\trans \vs$;  similarly, 
for any vector $\vt$ orthogonal to the row covariates
it is possible to identify
$\mA^\trans \vt$.

As explained before, the model~\eqref{E:regression-model} is often inadequate
for explaining observed data.  It is implausible that all sources of
variability have been observed.  To this end, one popular approach is to posit
existence of $r$ latent factors, such that
\begin{equation}\label{E:bilinear-model}
  \mY = \mX \mB^\trans + \mA \mZ^\trans + \mU \mV^\trans + \mE,
\end{equation}
where $\mU = [u_{ik}] \in \reals^{n \times r}$ is thought of as a matrix of
row scores and $\mV = [v_{jk}] \in \reals^{m \times r}$ is a matrix
of column loadings.  Model~\eqref{E:bilinear-model}, which combines regression
and factor analysis, is known as a bilinear model \citep{Gabr:78}.

The bilinear model has appeared in various forms, and it has a long history
dating back to~\citet{FishMack:23}, with notable early contributions by
\citet{Coch:43} and \citet{Will:52}.  The model was relatively obscure until
\citeauthor{Tuke:62}, unaware of his predecessors, suggested combining
regression and factor analysis in his essay, ``The future of data analysis''
\citeyearpar{Tuke:62}.  This inspired \citet{Goll:68} and
\citet{Mand:69,Mand:71} to independently reinvent Williams' version of the
latent factor model.  At this point, the models and its variants saw broader
adoption.  \citet{Free:73} surveys the early history, and
\citet{bartholomew2011latent} give a more recent history.

The special case when $\mX = \mone_{n,1}$ and $\mZ = \mone_{m,1}$ continues to
be popular in agronomy, where it is known as the additive main effects with
multiplicative interaction (AMMI) model \citep{Cros:etal:02,Dias:Krza:03}.
Other recent work on related latent factor models include papers by
\citet{vanE:95}, \citet{Corn:Seye:97}, \citet{Gabr:98}, \citet{West:03},
\citet{Hoff:07}, \citet{Carv:etal:08}, \citet{Leek:Stor:08},
\citet{Frig:etal:09}, and \citet{Sun:etal:12}.

Usually, the parameters of a bilinear model are estimated via least squares.
After this estimation, to perform inference on the coefficients, we need an
estimate of the error variance.  To this end, a persistent challenge
is the assignment of the appropriate ``degrees of
freedom'' to estimates of the factor term.  The statistical literature
remains divided on this issue:
\begin{itemize}
  \item \citet{Goll:68} proposed a parameter-counting scheme.  The least squares
    estimate of the first column of $\mU$, which is orthogonal to $\mX$, has
    $n$ components but satisfies $p$ constraints.  Similarly, the least squares
    estimate of the first column of $\mV$ has $m$ components but satisfies $q$
    constraints.  The scale of either estimated column can be fixed without
    affecting the overall fit. Thus, Gollub allocates $(n - p) + (m - q) - 1$
    degrees of freedom to the first term of the estimated factor.
    Similarly,
    he allocates $(n - p - k) + (m - q - k) - 1$ degrees of freedom to the
    $(k-1)$\textsuperscript{th} estimated factor term.

  \item \citet{Mand:71} noted that when there are no true factors
    ($r = 0$), if the elements of $\mE$ are independent
    normal random variables with common variance, then the
    squared Frobenius norm (sum of squares) of the $k$\textsuperscript{th} estimated latent
    factor is distributed as $\lambda_k$, the $k$\textsuperscript{th} largest eigenvalue of an
    $(m - q) \times (m - q)$ white Wishart matrix with $n - p$ degrees of
    freedom.  Thus, Mandel proposes allocating $\E[\lambda_k]$ degrees of
    freedom to the $k$\textsuperscript{th} estimated factor term, which he computes via
    Monte Carlo simulation.

  \item More recent approaches do not assume that the elements of $\mE$ have a
    common variance, and they use iterative schemes to estimate the factors
    and the noise variances simultaneously.  Essentially, these approaches
    treat the estimated factor scores $\mhU$ like observed covariates $\mX$.
    They either treat the factor loadings as fixed effects, allocating $m$
    degrees of freedom to the $k$\textsuperscript{th} estimated factor
    \citep{Leek:Stor:08,Sun:etal:12}, or they treat the factor loadings as
    random effects which may result in a smaller estimate for the degrees of freedom   
     \citep{Frig:etal:09}.
\end{itemize}
In agronomy and psychometrics applications, with smaller sample sizes,
Gollob's estimate is the more popular method \citep{Dias:Krza:03}; Mandel's
assumption of no true factors is seen as inappropriate.  In genomics
applications, the issues of adjusting for degrees of freedom do not receive much attention,
likely due to an implicit assumption that with large sample sizes, the
adjustment is unimportant.

\subsection{Our contribution}
We first show that that the general latent factor degree of freedom problem
can be reduced to the covariate-free case ($p = q = 0$).

Next, we bring recent developments in random matrix theory to bear on the degrees of freedom problem. In particular, we derive an expression for the expected value of the residual sum of squares. Using this, we then derive conservative estimates for degrees of freedom that are valid when the problem dimensions are large.  Even though
these estimates rely on asymptotic approximations, we observe them to be
accurate for sizes as small as $n = 10$ and $m = 50$.

In the context of linear regression, after dividing by the correct degrees of freedom, the test statistic has the usual t-distribution.  We do not have such a theoretical result in our context. However, we present simulation results showing that our test statistic, after properly adjusting for the degrees of freedom, also has  a corresponding t-distribution. This issue needs further theoretical investigation.\par
Finally, when the data sets are large, our method agrees with most of the other ad-hoc approaches presented in 
Section \ref{sec:pw}, thus bringing theoretical justification to these methods as well. Our results from simulations and the real data example from AGEMAP study also show that not adjusting for the extra degrees of freedom may result in significant loss of power.  In fact, the original analysis of the AGEMAP dataset conducted by \citet{Zahn:07} was criticized
for its low power \cite{Land:08}. Although our analysis in this paper was motivated by the AGEMAP study \citep{Zahn:07}, our methodology for testing from this paper extends easily to other problems. 

The rest of the paper is organized as follows. In Section~\ref{S:reduction}
we reduce the estimation problem to one in which there are no covariates.  In
Section~\ref{S:df}, we derive analytically an asymptotic expression for the
degrees of freedom, which we verify in Section~\ref{S:simulation}.  Next, in
Section~\ref{S:df-estimate}, we propose a conservative degrees of freedom
estimator.  We discuss the implication of our estimator in
the AGEMAP problem Section~\ref{S:AGEMAP}, and we close with a short discussion in
Section~\ref{S:summary}.


\section{Reduction to Covariate-Free Case}\label{S:reduction}
In this section, we show that it is sufficient to consider the case when $p = q = 0$.  
Consider the model
\[
  \mY = \mA \mZ^\trans + \mX \mB^\trans + \mU \mV^\trans + \mE,
\]
where $\mY \in \reals^{N \times M}$, $\mX \in \reals^{N \times p}$,
$\mZ \in \reals^{M \times q}$, and $\mU \mV^\trans$ has rank $r$.  Suppose
that identifiability constraints $\mX^\trans \mU = \mzero$ and
$\mZ^\trans \mV = \mzero$ hold, and
that $\mX$ and $\mZ$ have full column ranks. 
Assume that the rows of $\mE$ are independent
mean-zero multivariate normal random vectors with covariance $\mSigma$.  
Let $\vs$ be a test direction satisfying $\mZ^\trans \vs = \vzero$, and 
define $\sigma^2(\vs) = \vs^\trans \mSigma \vs$.

Take $\mhA$, $\mhB$, $\mhU$, and $\mhV$ to be the least squares estimates
of the parameters with $\hat r$ estimated latent factors and let
$\mhE = \mY - (\mhA \mZ^\trans + \mX \mhB^\trans + \mhU \mhV^\trans)$ be
the residual matrix.  Define residual degrees of freedom
\[
  \df_\mathrm{resid}(\vs) = \E( \vs^\trans \mhE^\trans \mhE \vs) / \sigma^2(\vs).
\]

Let $\mX = \mQ_1 \mR$ be the polar decomposition of $\mX$; that is,
$\mQ_1 \in \reals^{N \times p}$ is a matrix with orthonormal columns, and
$\mR \in \reals^{p \times p}$ is symmetric and positive definite.  Similarly,
let $\mZ = \mP_1 \mS$ be the polar decomposition of $\mZ$.  Choose $\mQ_2$ and
$\mP_2$ such that
\(
  \mQ
  =
  [
  \begin{matrix}
    \mQ_1 & \mQ_2
  \end{matrix}
  ]
\)
and
\(
  \mP
  =
  [
  \begin{matrix}
    \mP_1 & \mP_2
  \end{matrix}
  ]
\)
are orthogonal matrices.
Set $\mY_{22} = \mQ_2^\trans \mY \mP_2$,
$\mU_2 = \mQ_2^\trans \mU$, $\mV_2 = \mP_2^\trans \mV$, and
$\mE_{22} = \mQ_2^\trans \mE \mP_2$,
so that the reduced model holds:
\[
  \mY_{22} = \mU_2 \mV_2^\trans + \mE_{22},
\]
where
$\mY_{22} \in \reals^{n \times m}$ and $\mU_2 \mV_2^\trans$ has rank $r$, with
$n = N - p$ and $m = M - q$.
Note that the rows of $\mE_{22}$ are
independent mean-zero multivariate normal random vectors with covariance
$\mSigma_{22} = \mP_2^\trans \mSigma \mP_2$.
Define $\vs_2 = \mP_2^\trans \vs$ to be
the test direction for the reduced model,
which satisfies the relation
$\vs_2^\trans \mSigma_{22} \vs_2 = \sigma^2(\vs)$.

Take $\mhU_2$ and $\mhV_2$ to be the least squares estimates from
the reduced model for $\mY_{22}$, with $\hat r$ estimated latent factors, and let
$\mhE_{22} = \mE_{22} - \mhU_2 \mhV_2^\trans$ be the residual matrix.  Define
reduced model residual degrees of freedom
\[
  \df_\mathrm{resid}^{(2)}(\vs_2) = \E(\vs_2^\trans \mhE_{22}^\trans \mhE_{22} \vs_2) / \sigma^2(\vs).
\]
\begin{theorem}\label{thm:nocov}
Under the above conditions, we have
\(
  \df_\mathrm{resid}(\vs) = \df_\mathrm{resid}^{(2)}(\vs_2).
\)
\end{theorem}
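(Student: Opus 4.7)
First I would apply the orthogonal change of basis $\mQ = [\mQ_1\;\mQ_2]$ on the rows and $\mP = [\mP_1\;\mP_2]$ on the columns, and examine the resulting $2 \times 2$ block form of the signal. Using $\mQ_1^\trans \mX = \mR$, $\mQ_2^\trans \mX = \mzero$, $\mZ^\trans \mP_1 = \mS$, and $\mZ^\trans \mP_2 = \mzero$, together with the identifiability constraints $\mX^\trans \mU = \mzero$ (so $\mQ_1^\trans \mU = \mzero$) and $\mZ^\trans \mV = \mzero$ (so $\mP_1^\trans \mV = \mzero$), the transformed signal becomes
\[
  \mQ^\trans(\mA \mZ^\trans + \mX \mB^\trans + \mU \mV^\trans)\mP
  =
  \begin{pmatrix}
    \mQ_1^\trans \mA \mS + \mR \mB^\trans \mP_1 & \mR \mB^\trans \mP_2 \\
    \mQ_2^\trans \mA \mS & \mU_2 \mV_2^\trans
  \end{pmatrix},
\]
so the entire latent-factor contribution is confined to the $(2,2)$ block $\mU_2 \mV_2^\trans$, while the noise $\mQ^\trans \mE \mP$ retains independent multivariate-normal rows with $(2,2)$ block equal to $\mE_{22}$.

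Next I would combine this block form with the two-step least squares recipe from Section~1 (regress first, then take the leading SVD of the residual). The regression-only residual is $(\mI - \mH_{\mX})\mY(\mI - \mH_{\mZ}) = \mQ_2 \mY_{22} \mP_2^\trans$, and because $\mQ_2$ and $\mP_2$ have orthonormal columns, its SVD is $\mQ_2$ times the SVD of $\mY_{22}$ times $\mP_2^\trans$. Eckart--Young therefore yields $\mhU \mhV^\trans = \mQ_2 \mhU_2 \mhV_2^\trans \mP_2^\trans$, and hence $\mhE = \mQ_2 \mhE_{22} \mP_2^\trans$. Since $\mZ^\trans \vs = \vzero$ forces $\vs$ into the column span of $\mP_2$, I would write $\vs = \mP_2 \vs_2$ and apply $\mQ_2^\trans \mQ_2 = \mI = \mP_2^\trans \mP_2$ to obtain
\[
  \vs^\trans \mhE^\trans \mhE \vs = \vs_2^\trans \mhE_{22}^\trans \mhE_{22} \vs_2, \qquad \sigma^2(\vs) = \vs^\trans \mSigma \vs = \vs_2^\trans \mSigma_{22} \vs_2.
\]
Taking expectations and dividing by $\sigma^2(\vs)$ gives the claimed equality $\df_\mathrm{resid}(\vs) = \df_\mathrm{resid}^{(2)}(\vs_2)$.

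The hard part is the SVD-transfer step, i.e.\ justifying that the full-model least squares estimator satisfies $\mhU \mhV^\trans = \mQ_2 \mhU_2 \mhV_2^\trans \mP_2^\trans$. This relies crucially on the identifiability constraints, which force every admissible rank-$\hat r$ matrix $\mU \mV^\trans$ to live in the image of $\mQ_2(\cdot)\mP_2^\trans$; once that is granted, Frobenius unitary invariance decouples the least squares criterion into an exact fit on the first three blocks by the regression parameters and a reduced rank-$\hat r$ approximation problem for $\mY_{22}$ on the fourth block. The remainder is pure bookkeeping with the orthogonality of $\mQ$ and $\mP$.
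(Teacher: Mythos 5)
Your proposal is correct and follows essentially the same route as the paper: an orthogonal change of basis into $2\times 2$ block form, observing that the identifiability constraints confine the factor term to the $(2,2)$ block, transferring the rank-$\hat r$ SVD of the regression residual $\mQ_2\mY_{22}\mP_2^\trans$ to that of $\mY_{22}$, and concluding $\mhE = \mQ_2\mhE_{22}\mP_2^\trans$ so the quadratic forms coincide. The only cosmetic difference is that the paper first reparametrizes with an interaction term $\mX\mGamma\mZ^\trans$ and constraints $\mX^\trans\mA=\mzero$, $\mZ^\trans\mB=\mzero$ to make the block identities cleaner, whereas you work with the unconstrained blocks directly; the substance is identical.
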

\begin{proof}
Without loss of generality, redefine $\mA$ and $\mB$ to reparametrize
the model as
\[
  \mY = \mA \mZ^\trans + \mX \mB^\trans + \mX \mGamma \mZ^\trans + \mU \mV^\trans + \mE,
\]
where $\mX^\trans \mA = \mzero$ and $\mZ^\trans \mB = \mzero$.

We perform a change of bases and put the model in block form:
\begin{equation}\label{E:response-block}
  \begin{bmatrix}
    \mQ_1^\trans \\
    \mQ_2^\trans
  \end{bmatrix}
  \mY
  \begin{bmatrix}
    \mP_1 & \mP_2 \\
  \end{bmatrix}
  =
  \begin{bmatrix}
    \mY_{11} & \mY_{12} \\
    \mY_{21} & \mY_{22}
  \end{bmatrix}
  =
  \begin{bmatrix}
    \mR \mGamma \mS^\trans & \mR \mB_2^\trans \\
    \mA_2 \mS^\trans       & \mU_2 \mV_2^\trans
  \end{bmatrix}
  +
  \begin{bmatrix}
    \mE_{11} & \mE_{12} \\
    \mE_{21} & \mE_{22}
  \end{bmatrix},
\end{equation}
where $\mA_2 = \mQ_2^\trans \mA$ and
$\mB_2 = \mP_2^\trans \mB$ are the identifiable components of the regression
coefficients, $\mU_2 = \mQ_2^\trans \mU$ and
$\mV_2 = \mP_2^\trans \mV$ are the identifiable factor components,
and $\mY_{kl} = \mQ_k^\trans \mY \mP_l$ and $\mE_{kl} = \mQ_k^\trans \mE \mP_l$
for $k = 1, 2$ and $l = 1,2$.


From Equation~\eqref{E:response-block} it is apparent that the least squares
estimates of the coefficients are
$\mhA = \mQ_2 \mS^{-1} \mY_{21}$,
$\mhB = \mP_2 \mY_{12}^\trans \mR^{-\trans}$,
and
$\mhGamma = \mR^{-1} \mY_{11} \mS^{-\trans}$.  The regression residuals
$\mhE_0 = \mY - \mhA \mZ^\trans - \mX \mhB^\trans - \mX \mhGamma \mZ^\trans$
satisfy
\begin{equation*}
  \begin{bmatrix}
    \mQ_1^\trans \\
    \mQ_2^\trans
  \end{bmatrix}
  \mhE_0
  \begin{bmatrix}
    \mP_1 & \mP_2 \\
  \end{bmatrix}
  =
  \begin{bmatrix}
    \mzero & \mzero \\
    \mzero & \mU_2 \mV_2^\trans + \mE_{22}
  \end{bmatrix}.
\end{equation*}
Thus, the least squares estimate
$\mhU \mhV^\trans$ obtained from the rank~$\hat r$ singular
value decomposition of $\mhE_0$ is equal to
$\mQ_2 \mhU_2 \mhV_2^\trans \mP_2^\trans$, where $\mhU_2 \mhV_2^\trans$ is
the rank $\hat r$ singular value decomposition
of $\mY_{22} = \mU_2 \mV_2^\trans + \mE_{22}$.

The final residual matrix $\mhE = \mhE_0 - \mhU \mhV^\trans$ is given as
\(
  \mhE
  = \mQ_2 \mhE_{22} \mP_2^\trans,
\)
where
\[
  \mhE_{22}
  = \mY_{22} - \mhU_2 \mhV_2^\trans.
\]
Hence,
\[
  \vs^\trans \mhE^\trans \mhE \vs
  = \vs_2^\trans \mhE_{22}^\trans \mhE_{22} \vs_2
\]
and
\(
  \df_\mathrm{resid}(\vs) = \df_\mathrm{resid}^{(2)}(\vs_2);
\)
the proof is finished.
\end{proof}

\section{Degrees of Freedom}\label{S:df}
In light of Theorem \ref{thm:nocov}, without loss of generality we
will assume that there are no row or column covariates
($p = q = 0$). Our data generating model has $r \geq 0$ true latent factors:
\begin{equation}\label{eqn:regmod}
  \mY = \sqrt{n} \mU \mD \mV^\trans + \mE,
\end{equation}
with $\mU \in \reals^{n \times r}$, $\mV \in \reals^{m \times r}$
having orthonormal columns and a diagonal matrix $\mD \in \reals^{r \times r}$ 
with $[\mD_{kk}] = \sqrt{\mu_k}$ for $k = 1, \dotsc, r$.
We assume that the row vectors of the matrix $\mE$ are  mean-zero multivariate
normal with covariance matrix $\mSigma$.  

The estimates $\mhU$ and $\mhV$ can be obtained from the leading $\hat r$ terms of
the singular value decomposition (SVD) of $\mY$.  We choose the scaling such that
$\sqrt{n} \mhU \mhD \mhV^\trans$ comprises the leading $\hat r$ terms of the SVD of
$\mY$, where $\mhU \in \reals^{n \times \hat r}$ and $\mhV \in \reals^{m \times \hat r}$
have orthonormal columns, and $\mhD \in \reals^{\hat r \times \hat r}$ is diagonal with
$[\mhD]_{kk} = \sqrt{\hat \mu_k}$ for $k = 1, \dotsc, \hat r$.
After adjusting for the estimated latent factors, the residual matrix is
$\mhE = \mY - \sqrt{n} \mhU \mhD \mhV^\trans$.  
The residual sum of squares along the test direction $\vs \in \reals^m$ is given by
\begin{align}\label{eqn:RSSnoise}
  \RSS(\vs)
    &\equiv \vs^\trans \mhE^\trans \mhE \vs.
\end{align}
For $\vs \in \reals^m$, define the degrees of freedom
\begin{align}\label{eqn:dfs}
\df(\vs)  = \E\Big(n -\frac{\RSS(\vs)}{\vs^\trans \mSigma \vs}\Big)
\end{align}
so that 
\[
\E\, \Big(\frac{\RSS(\vs)}{\vs^\trans \mSigma \vs}\Big) = n - \df(\vs).
\]
\begin{lemma}\label{lem:RSS}
For the model in \eqref{eqn:regmod}, 
the residual sum of squares along a test direction $\vs \in \reals^m$ is given by
\begin{align}
\RSS(\vs) = \vs \mE^\trans \mE \vs
  +
  2 \sqrt{n}
  \cdot \vs^\trans \mV \mD \mU^\trans \mE \vs
  +
  n \big(
  \sum_{k=1}^{r}
      \mu_k \cdot (\vv_k^\trans \vs)^2
      -
     \sum_{k=1}^{\hat r} \hat \mu_k \cdot (\vhv_k^\trans \vs)^2
    \big).
\end{align}
\end{lemma}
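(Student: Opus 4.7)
The plan is to simply expand $\RSS(\vs) = \vs^\trans \mhE^\trans \mhE \vs$ by writing $\mhE = \mY - \sqrt{n}\mhU\mhD\mhV^\trans$ and substituting the model $\mY = \sqrt{n}\mU\mD\mV^\trans + \mE$. The key algebraic identity that will collapse the cross terms is the SVD defining relation: since $\sqrt{n}\mhU\mhD\mhV^\trans$ is the rank-$\hat r$ truncated SVD of $\mY$, the columns of $\mhU$ and $\mhV$ are left and right singular vectors with singular values $\sqrt{n\hat\mu_k}$, giving $\mY^\trans\mhU = \sqrt{n}\,\mhV\mhD$.

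First I would expand
\[
  \mhE^\trans\mhE = \mY^\trans\mY - \sqrt{n}\,\mY^\trans\mhU\mhD\mhV^\trans - \sqrt{n}\,\mhV\mhD\mhU^\trans\mY + n\,\mhV\mhD^2\mhV^\trans.
\]
Applying $\mY^\trans\mhU = \sqrt{n}\,\mhV\mhD$ (and its transpose) shows that the middle two terms each equal $n\,\mhV\mhD^2\mhV^\trans$, so after cancellation
\[
  \mhE^\trans\mhE = \mY^\trans\mY - n\,\mhV\mhD^2\mhV^\trans.
\]

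Next I would substitute the model into $\mY^\trans\mY$. Using the orthonormality $\mU^\trans\mU = \mI_r$, the expansion yields
\[
  \mY^\trans\mY = n\,\mV\mD^2\mV^\trans + \sqrt{n}\,\mV\mD\mU^\trans\mE + \sqrt{n}\,\mE^\trans\mU\mD\mV^\trans + \mE^\trans\mE.
\]
Sandwiching between $\vs^\trans$ and $\vs$ then turns the two cross terms into the single scalar $2\sqrt{n}\,\vs^\trans\mV\mD\mU^\trans\mE\vs$ (they are transposes of one another and hence equal).

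Finally, the quadratic forms $\vs^\trans\mV\mD^2\mV^\trans\vs$ and $\vs^\trans\mhV\mhD^2\mhV^\trans\vs$ decompose via the diagonal matrices $\mD$ and $\mhD$ as
\[
  \sum_{k=1}^{r}\mu_k(\vv_k^\trans\vs)^2 \quad\text{and}\quad \sum_{k=1}^{\hat r}\hat\mu_k(\vhv_k^\trans\vs)^2,
\]
respectively. Combining with the formula above gives exactly the stated identity. There is no real obstacle here; the only subtlety worth stating carefully is the identity $\mY^\trans\mhU = \sqrt{n}\,\mhV\mhD$, which relies on the precise scaling convention adopted for $(\mhU,\mhD,\mhV)$ in the definition of the truncated SVD.
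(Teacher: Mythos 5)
Your proposal is correct and follows essentially the same route as the paper: both arguments reduce to the identity $\mhE^\trans\mhE = \mY^\trans\mY - n\,\mhV\mhD^2\mhV^\trans$ (the paper gets there via the orthogonality $\mhE^\trans\mhY = \mzero$, you via the equivalent SVD relation $\mY^\trans\mhU = \sqrt{n}\,\mhV\mhD$) and then expand $\mY^\trans\mY$ using the model and $\mU^\trans\mU = \mI$. No gaps.
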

\begin{proof}
The proof is a straightforward computation and is deferred to the Appendix.
\end{proof}

\subsection{The Noise Case}
In this subsection we assume that there are no true latent factors, \emph{i.e.,} $r=0$, and so
$\mY = \mE$. The rows of $\mE$ are independently distributed according to
$\mathrm{N}(0,\mSigma)$.
\begin{theorem}\label{T:noise-case}
Suppose $\mSigma =  \sigma^2\mI$ for some $\sigma > 0$. If $\lim_{n \rightarrow \infty} \frac{n}{m} = c \in (0, \infty)$, then
\begin{equation}
\df(\vs)  = \hat r\Big( 1 + \sqrt{\frac{n}{m}}\Big)^2 + o(1).
\end{equation}
\end{theorem}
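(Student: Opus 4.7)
The plan is to apply Lemma~\ref{lem:RSS} with $r=0$ and exploit the rotational invariance afforded by $\mSigma = \sigma^2\mI$, so that the asymptotics reduce to understanding the top eigenvalues of a Wishart matrix.

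Setting $r=0$ in Lemma~\ref{lem:RSS} collapses the first two ``signal'' terms and leaves
\[
  \RSS(\vs) = \vs^\trans \mE^\trans \mE \vs - n \sum_{k=1}^{\hat r} \hat\mu_k\,(\vhv_k^\trans \vs)^2.
\]
Taking expectations, $\E[\vs^\trans \mE^\trans \mE \vs] = n\sigma^2\|\vs\|^2$, so the definition of $\df(\vs)$ in~\eqref{eqn:dfs} gives
\[
  \df(\vs) = \frac{n}{\sigma^2\|\vs\|^2}\sum_{k=1}^{\hat r}\E\!\left[\hat\mu_k\,(\vhv_k^\trans \vs)^2\right].
\]
The entire task is therefore to evaluate this expectation asymptotically.

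The first key step is decoupling eigenvectors from eigenvalues. Since $\mE$ has i.i.d.\ $\Normal(0,\sigma^2)$ entries, its right singular vector frame is Haar-distributed on the orthogonal group $O(m)$ and independent of the singular values. Consequently $(\vhv_k^\trans \vs/\|\vs\|)^2$ has the same distribution as the squared first coordinate of a uniform point on $S^{m-1}$, which has mean $1/m$, and it is independent of $\hat\mu_k$. Hence
\[
  \E\!\left[\hat\mu_k\,(\vhv_k^\trans \vs)^2\right] = \frac{\|\vs\|^2}{m}\,\E[\hat\mu_k],
\]
so that $\df(\vs) = \dfrac{n}{m\sigma^2}\sum_{k=1}^{\hat r}\E[\hat\mu_k]$, which no longer depends on $\vs$.

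The second key step is to invoke random matrix theory. Since $n\hat\mu_k$ is the $k$-th largest eigenvalue of the Wishart-type matrix $\mE^\trans\mE$, classical results (Geman's theorem for the top eigenvalue, its extension to the top $\hat r$ eigenvalues, and the Tracy--Widom edge scaling of~\citet{Joha:00} and Johnstone) give $\hat\mu_k \toas \sigma^2(1+\sqrt{m/n})^2$ for each fixed $k \leq \hat r$. Converting almost-sure convergence to convergence of expectations requires a uniform integrability argument, which follows from the standard exponential concentration bounds on the largest singular value of a Gaussian matrix. Substituting $\E[\hat\mu_k] = \sigma^2(1+\sqrt{m/n})^2 + o(1)$ and using the identity $(n/m)(1+\sqrt{m/n})^2 = (1+\sqrt{n/m})^2$ yields
\[
  \df(\vs) = \hat r\bigl(1+\sqrt{n/m}\bigr)^2 + o(1),
\]
which is the claim.

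The main obstacle is the moment-convergence rather than the almost-sure convergence of the top eigenvalues: I need control of $\E[\hat\mu_k]$, not just its limit in probability. This is resolved by combining the edge-universality/rigidity results with a standard tail bound (e.g., Davidson--Szarek) for the operator norm of a Gaussian matrix, which guarantees uniform integrability and hence convergence of $\E[\hat\mu_k]$ to $\sigma^2(1+\sqrt{m/n})^2$.
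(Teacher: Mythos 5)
Your proof is correct and follows the same skeleton as the paper's: apply Lemma~\ref{lem:RSS} with $r=0$, reduce everything to $\E[\hat\mu_k\,(\vhv_k^\trans\vs)^2]$, and use the fact that the top eigenvalues of $(1/n)\mE^\trans\mE$ converge to the Marchenko--Pastur edge $\sigma^2(1+\sqrt{m/n})^2$ while the singular-vector frame is Haar-distributed. The one place you genuinely diverge is in how the cross-expectation is evaluated. You invoke the exact independence of the singular values and right singular vectors of a Gaussian matrix (bi-orthogonal invariance), which lets you factor $\E[\hat\mu_k\,(\vhv_k^\trans\vs)^2]=\E[\hat\mu_k]\cdot\|\vs\|^2/m$ and then only requires convergence of the first moment $\E[\hat\mu_k]$, obtained from almost-sure convergence plus uniform integrability via Gaussian operator-norm tail bounds. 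The paper does not use this independence: it only uses the marginal rotational invariance of $\mhV$ (giving $\E[(\vhv_k^\trans\vs)^2]=\|\vs\|^2/m$ and a fourth-moment bound of the same order) together with $L^2$ convergence of $\hat\mu_k$ to the edge, and then controls the cross term by Cauchy--Schwarz. Your route is cleaner and entirely rigorous under the stated Gaussian, $\mSigma=\sigma^2\mI$ hypothesis; the paper's Cauchy--Schwarz route is the one that would survive a weakening of the normality assumption, where exact independence of eigenvalues and eigenvectors fails but moment bounds may persist. Both arguments close the gap between almost-sure convergence and convergence in expectation, which is the real content of the proof, so there is no gap in your proposal.
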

\begin{proof}
For ease of exposition, we first assume $\sigma = 1$.
Applying Lemma \ref{lem:RSS} with $r = 0$ yields
\begin{align*}
  \RSS(\vs) = \vs^\trans \mE^\trans \mE \vs - n \sum_{k=1}^{\hat{r}} \hat \mu_k \cdot (\vhv_k^\trans \vs)^2,
\end{align*}
and thus
\begin{align*}
  \E \,\RSS(\vs) = n\vs^\trans \vs - n \sum_{k=1}^{r} \E \,(\hat \mu_k \cdot (\vhv_k^\trans \vs)^2).
\end{align*}
The matrix $\mE^\trans \mE$ is an $m$-dimensional Wishart matrix with $n$
degrees of freedom and scale parameter $\mSigma$.
The values $\hat \mu_1, \dotsc, \hat \mu_{\hat r}$ are the $\hat r$ largest eigenvalues of
$(1/n) \mE^\trans \mE$.  \citet{Yin:88} show under very general conditions
that, as
$\lim_{n \rightarrow \infty} \frac{n}{m} = c \in (0, \infty)$, 
\[
  \hat \mu_k - (1 + \sqrt{m/n})^2 \toas 0.
\]
Using the results in \cite{pillai2011}, it can be shown that $\E(\hat \mu_k - (1 + \sqrt{m/n})^2)^2$ converges to $0$.
The distribution of $\mhV$ is invariant under multiplication by any $m \times m$ orthogonal
matrix, hence $\E[\vhv^\trans \vs]^2 = (\vs^\trans \vs)/m$.  Also, by a direct calculation using the properties of Haar measure, it can be shown that $(\E[\vhv^\trans \vs]^4)^{1/2} = O((\vs^\trans \vs)/m)$. 
By the above estimates and the Cauchy-Schwartz inequality, it follows that
\[
\E \,(\hat \mu_k \cdot (\vhv_k^\trans \vs)^2)= (1 + \sqrt{m/n})^2(\vs^\trans \vs)/m + o(1)(\vs^\trans \vs)/m
\]
and thus
\[
n \E \,(\hat \mu_k \cdot (\vhv_k^\trans \vs)^2)= (1 + \sqrt{n/m})^2(\vs^\trans \vs) + o(1)(\vs^\trans \vs).
\]
Therefore, if we set
\[
  \df_k(\vs) = (1 + \sqrt{n/m})^2, \qquad 1 \leq k \leq \hat r,
\]
and
\(
  \df(\vs) = \sum_{k=1}^{\hat r} \df_k(\vs),
\)
it follows that
\[
  \{n - \df(\vs)\} - \E[\frac{\RSS(\vs)}{\vs^\trans \vs}]  \rightarrow 0
\]
proving the claim for $\sigma = 1$. The proof for an arbitrary $\sigma >0$ follows
by an identical argument with minor changes.
\end{proof}

\subsection{The Signal Case}
Here we assume that the data are generated according to model \eqref{eqn:regmod} with
$r >0$ latent factors. Without loss of generality, the matrix
$\mD \in \reals^{r \times r}$ is
diagonal with $[\mD_{kk}] = \sqrt{\mu_k}$ for $k = 1, \dotsc, r$ and
$\mu_1 > \cdots > \mu_r > 0$.  Let $\sqrt{n} \mhU
\mhD \mhV^\trans$ be the $\hat r$-term estimated latent factors obtained from the
leading terms of the singular value decomposition of $\mY$, with $\hat r$ not
necessarily equal to $r$.  Let $\vv_k$ denote the $k$\textsuperscript{th} column of $\mV$. 

For any test vector $\vs \in \reals^m$, write
$ \vs = \vs_{\mV} + \vs_{\mV^{\perp}}$
where $\vs_{\mV}, \vs_{\mV^{\perp}}$ respectively denote the projections of $s$ to the column
spaces spanned by $\mV, \mV^{\perp}$. Also recall the degrees of freedom
$\df(\vs)$ given by \eqref{eqn:dfs}.
We need the following lemma, whose proof is given in the Appendix.
\begin{lemma}\label{lem:decomp}
The estimate of the $k$\textsuperscript{th} factor can be decomposed as
\[
  \vhv_k = \sum_{l=1}^{r} \hat \rho_{kl} \, \vv_l + (1 - \hat \rho_k^2)^{1/2} \, \vtv_k,
\]
where
\(
  \hat \rho_k^2 = \sum_{l=1}^{r} \hat \rho_{kl}^2
\),
$\mtV = [ \vtv_1 \dotsc \vtv_{\hat r} ]$ is uniformly distributed on the orthogonal
complement  of $\mV$ and $\hat \rho_{kl}$ is the estimated correlated coefficient.
\end{lemma}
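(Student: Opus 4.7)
The decomposition itself is algebraic. Setting $\hat\rho_{kl} = \vv_l^\trans \vhv_k$ makes $\sum_{l=1}^{r} \hat\rho_{kl}\,\vv_l$ the orthogonal projection of $\vhv_k$ onto $\mathrm{span}(\mV)$, so the residual $\vw_k := \vhv_k - \sum_{l=1}^{r}\hat\rho_{kl}\vv_l$ lies in the orthogonal complement of $\mV$. Since $\|\vhv_k\|=1$ and the $\vv_l$ are orthonormal, Pythagoras gives $\|\vw_k\|^2 = 1 - \sum_{l}\hat\rho_{kl}^2 = 1 - \hat\rho_k^2$, and defining $\vtv_k = \vw_k / (1-\hat\rho_k^2)^{1/2}$ yields a unit vector in $\mV^\perp$ with the claimed decomposition. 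All that remains is the distributional assertion.

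The plan for that part is a standard orthogonal-invariance argument. Let $\mathcal{G}$ denote the subgroup of $O(m)$ consisting of orthogonal $\mO$ that fix $\mathrm{span}(\mV)$ pointwise; equivalently, $\mO\vv_l = \vv_l$ for every $l$, while $\mO$ acts as an arbitrary orthogonal transformation on $\mV^\perp$. Under the tacit isotropy hypothesis $\mSigma = \sigma^2 \mI$ carried over from the previous subsection, the noise satisfies $\mE \mO^\trans \eqd \mE$. Combined with $\mV^\trans \mO^\trans = \mV^\trans$, this gives $\mY \mO^\trans \eqd \mY$ jointly with the signal parameters. Because the SVD is equivariant under right-multiplication of the data by an orthogonal matrix, a fresh SVD of $\mY\mO^\trans$ returns $\mO\mhV$ in the role of $\mhV$, so $\mO\mhV \eqd \mhV$.

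I would then translate this into the claim. The coefficients $\hat\rho_{kl} = \vv_l^\trans \vhv_k$ are $\mathcal{G}$-invariant, whereas $\vw_k \mapsto \mO\vw_k$ and therefore $\vtv_k \mapsto \mO\vtv_k$. Conditionally on the values of all $\hat\rho_{kl}$, the joint law of $(\vtv_1,\dots,\vtv_{\hat r})$ is thus invariant under the induced action of $O(m-r)$ on $\mV^\perp$. By Haar-uniqueness, each $\vtv_k$ is marginally uniform on the unit sphere of $\mV^\perp$ and the joint law is the unique $O(m-r)$-invariant one; this is what the lemma's short-hand "uniformly distributed on the orthogonal complement of $\mV$" is meant to encode for the purposes of evaluating quantities such as $\E[(\vtv_k^\trans \vs)^2]$ in the sequel.

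The main obstacle is conceptual rather than computational: one has to be careful about what uniformity really means here, since the orthonormality constraint $\vhv_j^\trans\vhv_k = \delta_{jk}$ forces the off-diagonal inner products $\vtv_j^\trans\vtv_k$ to be deterministic functions of $\{\hat\rho_{kl}\}$, so $\mtV$ does not occupy an unconstrained Stiefel manifold inside $\mV^\perp$. The invariance argument still pins down the conditional law uniquely — it is the pushforward of Haar measure on $O(m-r)$ subject to the induced constraints — which is enough for the downstream expectation calculations. A secondary item worth flagging is the reliance on $\mSigma = \sigma^2 \mI$; for general $\mSigma$ the column-rotation invariance of $\mE$ fails and a whitening step would have to be inserted before the orthogonal-group symmetry can be invoked.
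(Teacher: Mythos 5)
Your proof is correct and follows essentially the same route as the paper's: both establish the result by showing that the component of $\mhV$ in $\mV^\perp$ is equidistributed under orthogonal transformations of that subspace, using $\mE \mO^\trans \eqd \mE$ (hence $\mY\mO^\trans \eqd \mY$) together with equivariance of the SVD. Your explicit handling of the conditional law given the $\hat\rho_{kl}$ and the remark on the constrained Stiefel structure are slightly more careful than the paper's write-up, but the underlying idea is identical.
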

For $k  \leq r$, define the quantity
\begin{equation}\label{E:dfk}
  \df_k(\vs)
    = n \Big(1 - \frac{m}{n \mu_k} - \frac{m}{n \mu_k^2} \Big) \frac{(\vv_k^\trans \vs)^2}{\vs^\trans \vs}
    + \Big( 1 + \frac{1}{\mu_k}\Big)^2 \Big(1 - \frac{(\vv_k^\trans \vs)^2}{\vs^\trans \vs} \Big).
\end{equation}

Before stating our next result, we need the following assumption.\\
\textbf{Assumption A1.} Let $\hat \mu_k$ be the estimated singular values. For $1 \leq k \leq r$,
\begin{align} \label{ass:A1}
\mathbb{E}|\hat \mu_k - \bar \mu_k| = o(n^{-1/2})
\end{align}
where $\bar \mu_k = (\mu_k + 1)\Big(\frac{m}{n\mu_k} + 1\Big)$. Similarly,
\begin{align} \label{ass:A12}
\mathbb{E}|\hat \rho_{kl} - \bar \rho_k| = o(n^{-1/2})
\end{align}
where $\rho_{kk}^2 = \frac{1 - \tfrac{m}{n \mu_k^2}}{1 + \tfrac{m}{n \mu_k}}$,
and $\bar \rho_{kl} = 0$ when $k \neq l$.
\begin{theorem} \label{Thm:signal}
Let Assumption A1 hold.
Suppose $\mSigma = \mI$, $\hat{r} = r$. If
$n/m = c + o(n^{-1/2})$ for some $c \in (0, \infty)$
and if $\mu_r > c^{-1/2}$, then
\begin{equation}\label{eqn:dvs}
\df(\vs)  = \sum_{k=1}^r \df_k(\vs)
          + o\Big(n^{1/2} \sum_{k=1}^r \frac{(\vv_k^\trans \vs)^2}{\vs^\trans \vs} + n^{-1/2} \Big).
\end{equation}
\end{theorem}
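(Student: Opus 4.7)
The plan is to combine the exact expression for $\RSS(\vs)$ from Lemma~\ref{lem:RSS} with the coordinate decomposition of Lemma~\ref{lem:decomp}, take expectations using the Haar distribution of $\mtV$ on the orthogonal complement of $\mV$, and then use Assumption~A1 to replace the random quantities $\hat\mu_k$ and $\hat\rho_{kl}$ by their deterministic proxies $\bar\mu_k$ and $\bar\rho_k$. Taking expectations in Lemma~\ref{lem:RSS} with $\mSigma = \mI$ kills the linear cross term (since $\E\mE = \mzero$) and gives $\E[\vs^\trans \mE^\trans \mE \vs] = n\vs^\trans\vs$, so definition~\eqref{eqn:dfs} reduces to $\df(\vs) = (n/\vs^\trans\vs)\sum_{k=1}^r\{\E[\hat\mu_k(\vhv_k^\trans\vs)^2] - \mu_k(\vv_k^\trans\vs)^2\}$, and it suffices to match each summand to $\df_k(\vs)$ up to the stated error.

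Expanding $(\vhv_k^\trans\vs)^2$ via Lemma~\ref{lem:decomp} produces three types of terms: an in-plane square $(\sum_l \hat\rho_{kl}(\vv_l^\trans\vs))^2$, a linear-in-$\vtv_k$ cross term, and an orthogonal-complement square $(1-\hat\rho_k^2)(\vtv_k^\trans\vs)^2$. Because $\mSigma = \mI$, the noise $\mE$ is rotationally invariant within $\mV^{\perp}$, so conditional on $(\hat\mu_k, \hat\rho_{kl})$ the vector $\vtv_k$ is symmetric on the unit sphere of $\mV^{\perp}$; the linear cross term therefore vanishes in expectation and $\E[(\vtv_k^\trans\vs)^2 \mid \hat\mu_k,\hat\rho_{kl}] = \|\vs_{\mV^{\perp}}\|^2/(m-r)$. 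In the in-plane square, the off-diagonal entries $\hat\rho_{kl}$ with $l \neq k$ contribute negligibly by Assumption~A1. After substituting $\bar\mu_k$ for $\hat\mu_k$ and $\bar\rho_k$ for $\hat\rho_{kk}$, the summand reduces to $n\bar\mu_k\bar\rho_k^2(\vv_k^\trans\vs)^2/(\vs^\trans\vs) + n\bar\mu_k(1-\bar\rho_k^2)\|\vs_{\mV^{\perp}}\|^2/((m-r)\vs^\trans\vs) - n\mu_k(\vv_k^\trans\vs)^2/(\vs^\trans\vs)$.

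A short calculation from the definitions $\bar\mu_k = (\mu_k+1)(1 + m/(n\mu_k))$ and $\bar\rho_k^2 = (1-m/(n\mu_k^2))/(1+m/(n\mu_k))$ yields the two identities $\bar\mu_k\bar\rho_k^2 = (\mu_k+1)(1 - m/(n\mu_k^2))$ and $\bar\mu_k(1-\bar\rho_k^2) = m(\mu_k+1)^2/(n\mu_k^2)$. Substitution gives $n(\bar\mu_k\bar\rho_k^2 - \mu_k) = n(1 - m/(n\mu_k) - m/(n\mu_k^2))$, reproducing the first coefficient of $\df_k(\vs)$. Combined with $\|\vs_{\mV^{\perp}}\|^2/(\vs^\trans\vs) = 1 - \sum_l(\vv_l^\trans\vs)^2/(\vs^\trans\vs)$, the identity $n\bar\mu_k(1-\bar\rho_k^2)/(m-r) = (1+1/\mu_k)^2 m/(m-r)$ reproduces the second term of $\df_k(\vs)$, up to absorbable discrepancies $m/(m-r) - 1 = O(1/m) = o(n^{-1/2})$ and $\sum_{l \neq k}(\vv_l^\trans\vs)^2/(\vs^\trans\vs) = o(n^{1/2}\sum_l (\vv_l^\trans\vs)^2/(\vs^\trans\vs))$.

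The main obstacle is propagating the purely $L^1$ bounds~\eqref{ass:A1}--\eqref{ass:A12} through products such as $\hat\mu_k\hat\rho_{kl}^2$ and $\hat\mu_k(1-\hat\rho_k^2)(\vtv_k^\trans\vs)^2$. Since only first-moment control is available, I would combine a telescoping triangle argument with the trivial bound $|\hat\rho_{kl}|\le 1$ and with uniform boundedness of $\hat\mu_k$, the latter a consequence of the phase-transition assumption $\mu_r > c^{-1/2}$ that keeps the spiked singular values away from the bulk edge. Each resulting $L^1$ error is $o(n^{-1/2})$; multiplying by the outside factor $n$ produces an $o(n^{1/2})$ contribution on each in-plane term (carrying weight $(\vv_k^\trans\vs)^2/(\vs^\trans\vs)$) and, after the additional division by $m - r \asymp n$ in the orthogonal term, an $o(n^{-1/2})$ contribution. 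Summing over $k$ yields exactly the error rate stated in~\eqref{eqn:dvs}.
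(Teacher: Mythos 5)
Your proposal is correct and follows essentially the same route as the paper's own proof: reduce $\df(\vs)$ to the summands $\E\{\hat\mu_k(\vhv_k^\trans\vs)^2\} - \mu_k(\vv_k^\trans\vs)^2$ via Lemma~\ref{lem:RSS}, expand with Lemma~\ref{lem:decomp}, kill the cross term and evaluate $\E(\vtv_k^\trans\vs)^2$ by rotational invariance, and substitute $\bar\mu_k,\bar\rho_{kk}$ using Assumption~A1 together with the algebraic identities that recover $\df_k(\vs)$. Your closing paragraph on propagating the $L^1$ bounds through products like $\hat\mu_k\hat\rho_{kl}^2$ is a detail the paper leaves implicit, and is a welcome addition.
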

\begin{remark}
 Theorem \ref{Thm:signal} holds for $\Sigma = \sigma^2 I$ for any $\sigma > 0$.
For $\sigma \neq 1$, we just need to replace $\mu_k$ with $\mu_k / \sigma^2$ in
Equation~\eqref{E:dfk}.
\end{remark}
\begin{proof}
Since $\hat r = r$, from Lemma \ref{lem:RSS} we obtain
\begin{align}\label{eqn:Erss}
  \E\{\RSS(\vs)\}
  =
  n \cdot \vs^\trans \vs
  +
  n \cdot
  \sum_{k=1}^{r}
    [
      \mu_k \cdot (\vv_k^\trans \vs)^2
      -
      \E\{\hat \mu_k \cdot (\vhv_k^\trans \vs)^2\}
    ].
\end{align}
Thus
\begin{align} \label{eqn:dfnsdef}
\df(\vs) = n -  \frac{\E\{\RSS(\vs)\}}{\vs^\trans \vs} =
- (n/\vs^\trans \vs) \cdot
  \sum_{k=1}^{r}
    [ 
      \mu_k \cdot (\vv_k^\trans \vs)^2
      -
      \E\{\hat \mu_k \cdot (\vhv_k^\trans \vs)^2\}
    ].
\end{align}
We focus our attention on the $k$\textsuperscript{th} summand of the last term.  Write
\begin{align}
  \hat \mu_k  &=  \bar \mu_k  + n^{-1/2} Z_{k}, \label{eqn:muk} \\
  \hat \rho_{kl} &= \bar \rho_{kl} + n^{-1/2} W_{kl},
\end{align}
where
\begin{align}
  \bar \mu_k
    &= (\mu_k + 1)\Big(\frac{m}{n\mu_k} + 1\Big), \label{eqn:mk} \\
  \bar \rho_{kk}^2
    &= \frac{1 - \tfrac{m}{n \mu_k^2}}{1 + \tfrac{m}{n \mu_k}}, \label{eqn:BRresult}
\end{align}
and $\bar \rho_{kl} = 0$ when $k \neq l$.

Theorem 5 of \citet{ona07} gives  that, if $\mu_k > c^{-1/2}$, then
$Z_k$ converges in distribution to a mean-zero normal random variable.
Furthermore, since $\mu_k \neq \mu_l$, Theorem 1 of \citet{ona07} also yields 
that the vector $(W_{kl}, l = 1, \dotsc, r)$ is asymptotically mean-zero multivariate normal
with uncorrelated elements. 
Though not stated explicitly, Onatski's proof shows that $Z_k$ and $W_{kl}$
are asymptotically uncorrelated for $l = 1, \dotsc, r$. 

Next, by Lemma \ref{lem:decomp}, we have $\E(\vtv^\trans \vs) = 0$ and 
\[
\E(\vtv_k^\trans \vs)^2 = \E(\vtv_k^\trans \vs_{\mV^{\perp}})^2
= \frac{1}{m-r} \vs_{\mV^{\perp}}^\trans \vs_{\mV^{\perp}}.
\]
By Lemma \ref{lem:decomp} and Equations \eqref{eqn:muk}~--~\eqref{eqn:BRresult} and Assumption A1, we obtain 
\begin{align}
\E\{ \hat \mu_k \cdot (\vhv_k^\trans \vs)^2 \}
 &= \sum_{l=1}^{r} \{ \bar \mu_k \cdot \bar \rho_{kl}^2 + o(n^{-1/2}) \} \cdot (\vv_l^\trans \vs)^2 \notag \\
 &\qquad+ \{ \bar \mu_k \cdot (1 - \bar \rho_{kk}^2) + o(n^{-1/2}) \} \cdot
\frac{1}{m-r} \vs_{\mV^{\perp}}^\trans \vs_{\mV^{\perp}}.
 \end{align}
 Therefore,
 \begin{align*}
-n \Big(\E( \mu_k & \cdot (\vv_k^\trans \vs)^2) - \E(\hat \mu_k \cdot (\vhv_k^\trans \vs)^2)\Big)
  \\
 & = - n (\mu_k - \bar \mu_k \bar\rho_{kk}^2) (\vv_k^\trans \vs)^2 
  + o(n^{1/2} \cdot \vs_{\mV}^\trans \vs_{\mV}) \\
 &\hspace{2cm} + \bar \mu_k (1 - \bar \rho_{kk}^2) \frac{n}{m}
\vs_{\mV^{\perp}}^\trans \vs_{\mV^{\perp}} + o(n^{-1/2}
\vs_{\mV^{\perp}}^\trans \vs_{\mV^{\perp}}) \\
 &= \vs^\trans \vs \cdot \df_k(\vs) + o(n^{1/2} \vs_{\mV}^\trans \vs_{\mV} +
n^{-1/2} \vs_{\mV^{\perp}}^\trans \vs_{\mV^{\perp}}).
 \end{align*}
 The result now follows by summing over $k$.
\end{proof}
\begin{remark}
 We use Assumption A1 only for making the exposition simple. It can be shown to hold under a broad class of conditions by the arguments of Theorem 5 of \citet{ona07}. Even if Assumption A1 does not hold, Theorem \ref{Thm:signal} holds with a slightly larger error term. Since we use a different estimator instead of the one defined in Equation \ref{E:dfk} (see Section \ref{S:df-estimate}) in our data analysis, we do not pursue this issue further.
 \end{remark}
\begin{corollary}\label{Cor:signal-wrong-r}
Under the hypothesis of Theorem \ref{Thm:signal}, if we instead suppose $ \hat{r} \neq r$, 
then
\begin{equation}
\df(\vs)  = \sum_{k=1}^r \df_k(\vs) + \err(\vs)
          + o\Big(n^{1/2} \sum_{k=1}^r \frac{(\vv_k^\trans \vs)^2}{\vs^\trans \vs} + n^{-1/2} \Big),
\end{equation}
where
\[
  \err(\vs) =
  \begin{cases}
     - n \cdot \sum_{k=\hat r + 1}^{r} \mu_k \cdot \frac{(\vv_k^\trans \vs)^2}{\vs^\trans \vs} &\text{if $\hat r < r$,} \\
     (\hat r - r) \,\Big( 1 + \sqrt{\frac{n}{m}}\Big)^2 &\text{if $\hat r > r$.}
  \end{cases}
\]
\end{corollary}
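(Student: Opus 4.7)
The plan is to begin from the exact expression
\[
  \df(\vs) \cdot \vs^\trans \vs
  = -n \sum_{k=1}^r \mu_k (\vv_k^\trans \vs)^2
    + n \sum_{k=1}^{\hat r} \E\{\hat\mu_k (\vhv_k^\trans \vs)^2\},
\]
obtained by taking expectations in Lemma~\ref{lem:RSS} with $\mSigma = \mI$ and using $\E(\mE) = \vzero$. This is the same starting point as in Theorem~\ref{Thm:signal}, differing only in that the two sums now range over different index sets. The strategy is to pair terms for $k \leq \min(r,\hat r)$ to reuse that theorem's asymptotics under Assumption~A1, and then to handle the unmatched indices separately according to whether they correspond to a missed signal or to an extra noise factor.

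For $\hat r < r$, I would pair the first $\hat r$ summands; each paired contribution reduces to $\df_k(\vs) + o(n^{1/2}\vs_{\mV}^\trans \vs_{\mV} + n^{-1/2}\vs_{\mV^\perp}^\trans \vs_{\mV^\perp})/\vs^\trans \vs$ by the same computation given in the proof of Theorem~\ref{Thm:signal}. The unmatched signal indices $k = \hat r + 1, \ldots, r$ in the first sum survive intact and contribute the bias term $-n \sum_{k=\hat r+1}^r \mu_k (\vv_k^\trans \vs)^2 / (\vs^\trans \vs)$, which is precisely $\err(\vs)$ in this case. Any residual discrepancy between the paired-term sum and $\sum_{k=1}^r \df_k(\vs)$ is absorbed into the stated error tolerance.

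For $\hat r > r$, pairing $k = 1,\ldots,r$ yields $\sum_{k=1}^r \df_k(\vs)$ plus the permissible error, again by Theorem~\ref{Thm:signal}. For the spurious indices $k = r+1,\ldots,\hat r$, the eigenvalue $\hat \mu_k$ lies below the BBP phase transition and must be analyzed as a noise-regime quantity. The idea is to project on the right onto the orthogonal complement of $\mathrm{span}(\vv_1,\ldots,\vv_r)$ and observe that the remaining singular structure of $\mY$ reduces, to leading order, to that of a pure-noise Wishart problem of dimension $n \times (m-r)$. Applying Theorem~\ref{T:noise-case} to this reduced problem, each excess factor contributes $\bigl(1 + \sqrt{n/(m-r)}\bigr)^2 = (1 + \sqrt{n/m})^2 + o(1)$ to the degrees of freedom, giving $(\hat r - r)(1 + \sqrt{n/m})^2$ as required; the discrepancy caused by replacing $\vs_{\mV^\perp}^\trans \vs_{\mV^\perp}/\vs^\trans \vs$ by $1$ is of size $\sum_{k=1}^r (\vv_k^\trans \vs)^2/\vs^\trans \vs$, which is $o(n^{1/2})$ times the stated error scale and hence absorbed.

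The main obstacle is rigorously justifying the noise-regime analysis in the case $\hat r > r$. The spurious eigenvectors $\vhv_{r+1},\ldots,\vhv_{\hat r}$ are not independent of the top $r$ signal directions, so one cannot invoke Theorem~\ref{T:noise-case} off the shelf. My approach would be a perturbation/deflation argument: condition on $\vhv_1,\ldots,\vhv_r$, use Theorem~\ref{Thm:signal} to assert their concentration near $\vv_1,\ldots,\vv_r$, and show that the deflated residual matrix is asymptotically indistinguishable from a rank-zero Wishart of size $n \times (m-r)$, for which the invariance-based computation of $\E\{\hat\mu_k(\vhv_k^\trans \vs)^2\}$ proceeds exactly as in the proof of Theorem~\ref{T:noise-case}. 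Uniform integrability of $\hat\mu_k$ in this reduced problem -- needed to pass from almost-sure convergence to expectations -- would follow from the moment bounds of \citet{pillai2011} already leveraged for the noise case.
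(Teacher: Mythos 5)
Your proposal is correct and follows essentially the same route as the paper: both start from the exact expectation identity of Lemma~\ref{lem:RSS}, pair the first $\min(r,\hat r)$ summands with the asymptotics of Theorem~\ref{Thm:signal}, and treat the unmatched indices as the $\err(\vs)$ term. The only real difference is in the $\hat r > r$ case, where you propose a deflation-and-conditioning reduction to the pure-noise problem of Theorem~\ref{T:noise-case}; the paper short-circuits this by directly citing Theorem~1 of \citet{ona07} for $\hat\mu_k = (1+\sqrt{m/n})^2 + o_p(1)$ when $k > r$, together with $\E\{(\vhv_k^\trans \vs)^2\} = \vs^\trans\vs/m$, which avoids the uniform-integrability and independence issues you correctly flag as the main obstacle to making the deflation argument rigorous.
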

\begin{proof}
Suppose $\hat{r}< r$. Then  by Equation \eqref{eqn:dfnsdef} in the proof of Theorem \ref{Thm:signal} we obtain
\begin{align} \label{eqn:corhrlr}
(\vs^\trans \vs) \cdot \df(\vs) = 
- n \cdot
  \sum_{k=1}^{\hat{r}}
    \{
      \mu_k \cdot (\vv_k^\trans \vs)^2
      -
      \E(\hat \mu_k \cdot (\vhv_k^\trans \vs)^2)
    \} - n \cdot
  \sum_{k=\hat{r}+1}^{r}
    \{
      \mu_k \cdot (\vv_k^\trans \vs)^2
     \}.
\end{align}
From \eqref{eqn:corhrlr} and Theorem \ref{Thm:signal}, the claim follows for $\hat{r} < r$.\par
Suppose to the contrary that $\hat{r} > r$. A computation similar to the above yields
\begin{align} \label{eqn:corgrlr}
(\vs^\trans \vs) \cdot \err(\vs) = 
n \cdot
  \sum_{k=r+1}^{\hat{r}}
      \E\{\hat \mu_k \cdot (\vhv_k^\trans \vs)^2\}.
\end{align}
Theorem 1 of \citet{ona07}  gives that
\[
\hat \mu_k  = \Big(1 + \sqrt{\frac{m}{n}}\Big)^2 + o_p(1)
\]
and $ \E\{(\vhv_k^\trans \vs)^2\} = \vs^\trans \vs/m$. Summing over $k$ yields the claim and the proof is finished.
\end{proof}

\begin{remark}
The requirement that $\mu_k > c^{-1/2}$ in Theorem ~\ref{Thm:signal} and Corollary~\ref{Cor:signal-wrong-r} is not artificial; there indeed is a phase transition
in the asymptotic behavior of the eigenvalues at $\mu_k = c^{-1/2}$ (see \cite{baik2005phase,
ona07}).
Consequently, Theorem~\ref{Thm:signal} and Corollary~\ref{Cor:signal-wrong-r} do not apply if
some $\mu_k$ is below the phase transition ($\mu_k \leq c^{-1/2}$).  Following
an argument similar to the proof of the $\hat r > r$ case of
Corollary~\ref{Cor:signal-wrong-r}, we conjecture that when $\mu_k \leq
c^{-1/2}$, the degree of freedom term $\df_k(\vs)$ should be defined as
\[
  \df_k(\vs)
  =
  \Big(1 + \sqrt{\frac{n}{m}}\Big)^2
  -
  n
  \mu_k \frac{(\vv^\trans \vs)^2}{\vs^\trans \vs}.
\]
\end{remark}

\section{Simulation Study}\label{S:simulation}

We perform a number of confirmatory simulations to verify the theory in
Section~\ref{S:df}.  In these simulations, we vary the number of rows, $n$,
over the set $\{ 5, 10, 50, 100 \}$ and we vary the number of columns, $m$,
over the set $\{ 5, 10, 50, 100, 500, 1000, 5000, 10000 \}$.  We take the test
direction $\vs$ to be the first standard basis vector $\vs = ( 1, 0, \dotsc,
0)$ in $\reals^m$.  For a given set of simulation parameters, we perform
$10,000$ replicates of the following procedure: \begin{enumerate}

\item Generate data from the model with $r$ latent factors,
\(
  \mY = \sqrt{n} \mU \mD \mV^\trans + \mE,
\)
where the elements of $\mE$ are independent mean-zero normal variates with
variance $\sigma^2 = 1$.  Matrices $\mU$ and $\mV$ have orthonormal columns,
while $\mD$ is diagonal with $(\mD)_{kk}^2 = \mu_k$ for $k = 1, \dotsc, r$.
In each set of simulations, we fix $\mD$ and $\mV$, and we
generate a uniform random $\mU$ for each simulation replicate.

\item Fit the bilinear model with $\hat r$ latent factors via least squares,
\(
  \mhY = \sqrt{n} \mhU \mhD \mhV^\trans.
\)
Compute the residual matrix $\mhE = \mY - \mhY$.

\item Calculate the residual sum of squares along the test direction, 
\(
  \RSS(\vs) = \vs^\trans \mhE^\trans \mhE \vs,
\)
and the observed degrees of freedom along this direction,
\(
  \df(\vs) = n - \RSS(\vs) / \sigma^2.
\)
\end{enumerate}
We estimate $\E\{\df(\vs)\}$ as the average value of $\df(\vs)$ over all
replicates of the simulation; we also compute the standard error of the
estimate via the central limit theorem.  Finally, we compare the theoretical
degrees of freedom estimate to the simulation-based estimate.

\subsection{Noise Case}

In the noise case, we simulate with no true latent factors ($r = 0$), and we
fit with one estimated latent factor using $\hat r = 1$.  The theoretical
degrees of freedom are computed from Theorem~\ref{T:noise-case}.  As can be
seen in Figure~\ref{fig:sim-noise-dfcurve}, the theory
fits well with the simulations when the problem dimensions are large, say for
$n \geq 2500$ (smaller problem dimensions are excluded from the figure).

\begin{figure}
  \centering
  \includegraphics[scale=0.7]{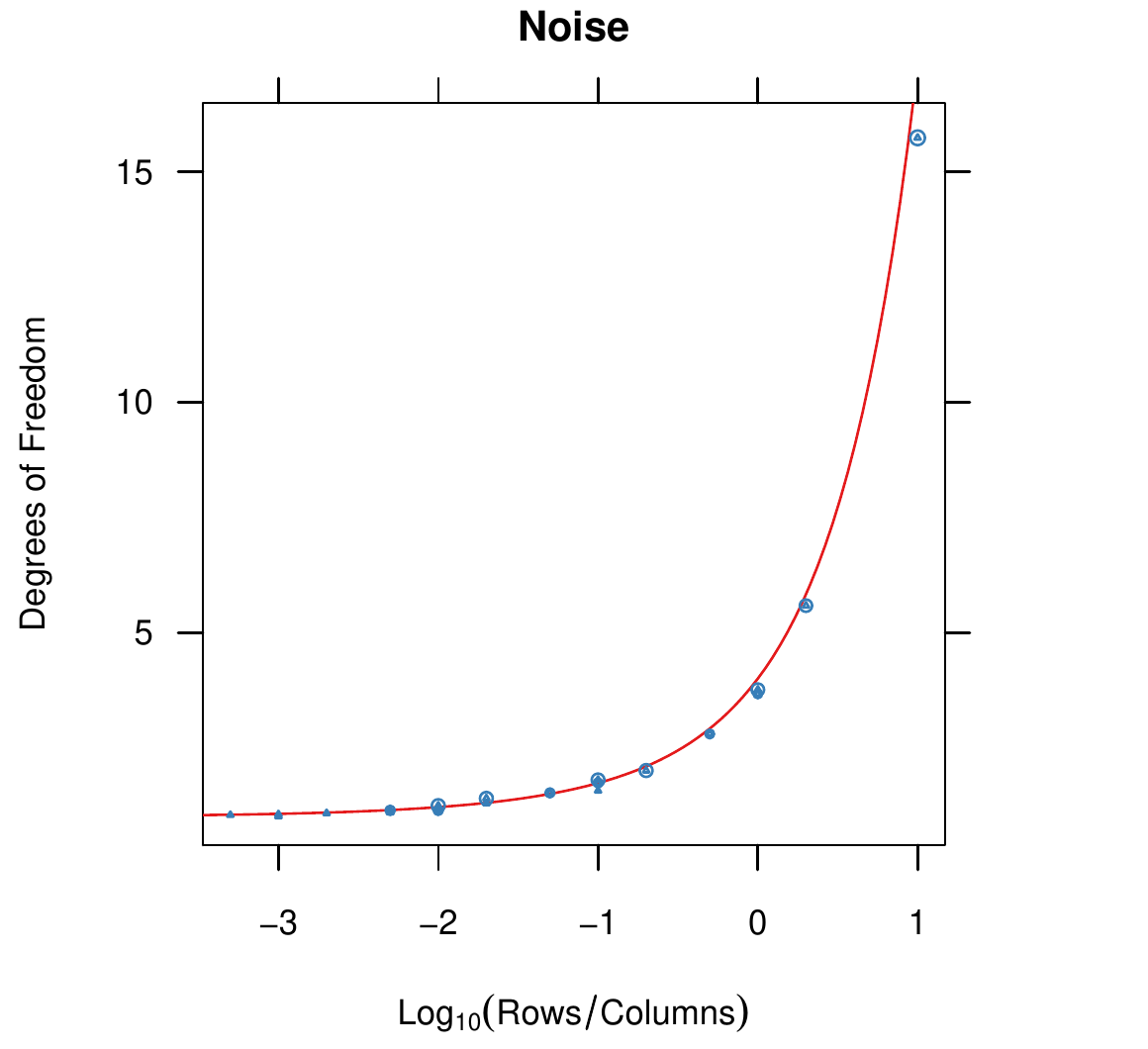}
  \caption{
    Theoretical degrees of freedom for the null case (solid red line) agree
    with the empirical estimates (blue points and circles).  Circle radius shows one
    standard error of the estimates along the $y$-axis.
  }\label{fig:sim-noise-dfcurve}
\end{figure}

\subsection{Signal Case}

For the signal case, the degrees of freedom depend on the signal strength and
true factors.  We simulate $r = 1$ true latent factor with signal strength
$\mu$ varying over the set $\{ 1.0, 1.5, 3.0, 21.0 \}$.  We consider four choices
of the factor loading vector $\vv$:
\begin{description}
  \item[Ones.] $\vv = (1 / \sqrt{m}, \dotsc, 1 / \sqrt{m})$;
  \item[Basis.] $\vv = (1, 0, \dotsc, 0)$;
  \item[Perp.~Ones.] $\vv = (0, 1/\sqrt{m-1}, \dotsc, 1/\sqrt{m-1})$;
  \item[Perp.~Basis.] $\vv = (0, 1, 0, \dotsc, 0)$.
\end{description}
In all cases, $\vv$ is a unit vector.  In the ``Perp.''~cases, $\vv$ is
orthogonal to the test direction $\vs$.

The asymptotic degrees of freedom in each of the four cases are as follows:
\begin{description}
  \item[Ones.]
    \[
      \df(\vs) =
      \begin{cases}
        1 + n/m + \sigma^2 / \mu
          &\text{if $\mu > \sigma^2 \sqrt{m/n}$,} \\
        (1 + \sqrt{n/m})^2 - (\mu / \sigma^2) \cdot \sqrt{n/m}
          &\text{otherwise.}
      \end{cases}
    \]
  \item[Basis.]
    \[
      \frac{\df(\vs)}{n} = 
      \begin{cases}
        1 - m \sigma^2 / (n \mu) - m \sigma^4 / (n \mu^2)
          &\text{if $\mu > \sigma^2 \sqrt{m/n}$,} \\
        - (\mu / \sigma^2)
          &\text{otherwise.}
      \end{cases}
    \]
  \item[Perp.~Ones, Perp.~Basis.]
    \[
      \df(\vs) =
      \begin{cases}
        1 + (\sigma^2 / \mu)^2
          &\text{if $\mu > \sigma^2 \sqrt{m/n}$,} \\
        (1 + \sqrt{n/m})^2
          &\text{otherwise.}
      \end{cases}
    \]
\end{description}
In the ``Basis'' case, we study $\df(\vs) / n$ instead of $\df(\vs)$ so
that the asymptotic limit depends on $n$ only through the ratio $n/m$.
Figure~\ref{fig:sim-signal-dfcurve-all4}
demonstrates that the asymptotic expressions agree with the theory, even for
relatively small sample sizes.

\begin{figure}
  \centering
  \includegraphics[width=\textwidth]{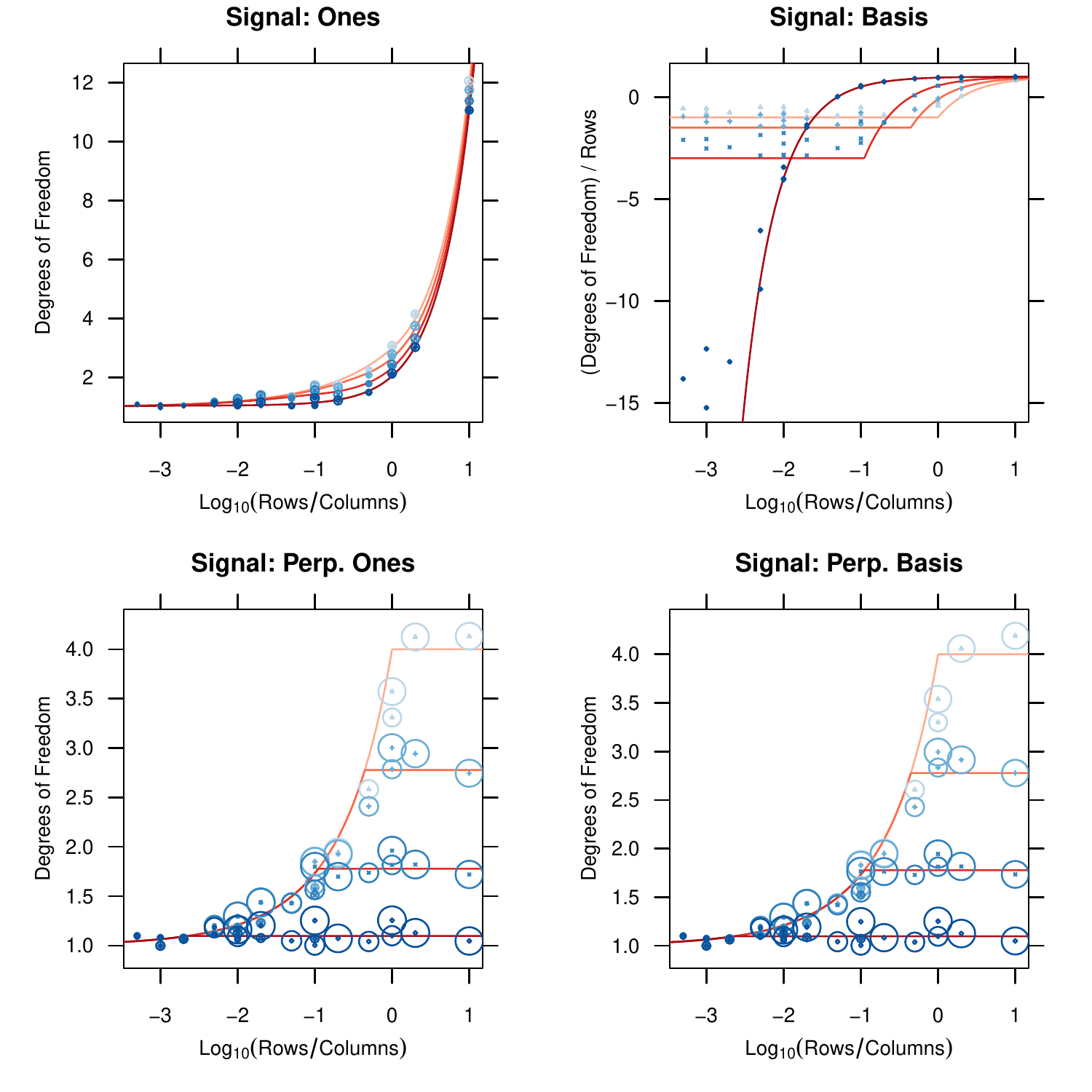}
  \caption{
    Theoretical (solid lines) and empirical estimates (points and circles)
    for the signal case agree (lighter hues correspond to weaker signal strengths).
    When the true signal vector is equal to the test direction (``Signal:
    Basis''), the empirical results fit well with the theory for large sample
    sizes; agreement is better when the signal strength is above the phase
    transition.
    When the true signal vector is orthogonal to the test direction (``Signal:
    Perp.~Ones'' and ``Signal: Perp.~Basis''), degrees of freedom do not depend
    on $m/n$ when signal strength is above the phase transition.
%
  }
  \label{fig:sim-signal-dfcurve-all4}
\end{figure}

\subsection{Agreement with Chi-Squared Distribution}

For each of the simulation settings considered, we compare the distribution of
the degrees-of-freedom adjusted residual sum of squares with the corresponding
$\chi^2$ distribution.  For example, in the noise simulation with $m = 1000$
and $n = 10$, we ran $10000$ replicates of the simulation.  For each
simulation, we computed a residual sum of squares value.  We then compared the
empirical quantiles of the $10000$ values with the quantiles of a $\chi^2$
distribution with degrees of freedom predicted by Theorem~\ref{T:noise-case}.
Figure~\ref{fig:sim-noise-qq} shoes a quantile-quantile plot of the results.
We can see good agreement between the empirical and the theoretical
distribution. Note that, when the residual sum of squares have the posited chi-squared distribution, our test statistic has a $t$ distribution.

\begin{figure}
  \centering
  \includegraphics[scale=0.7]{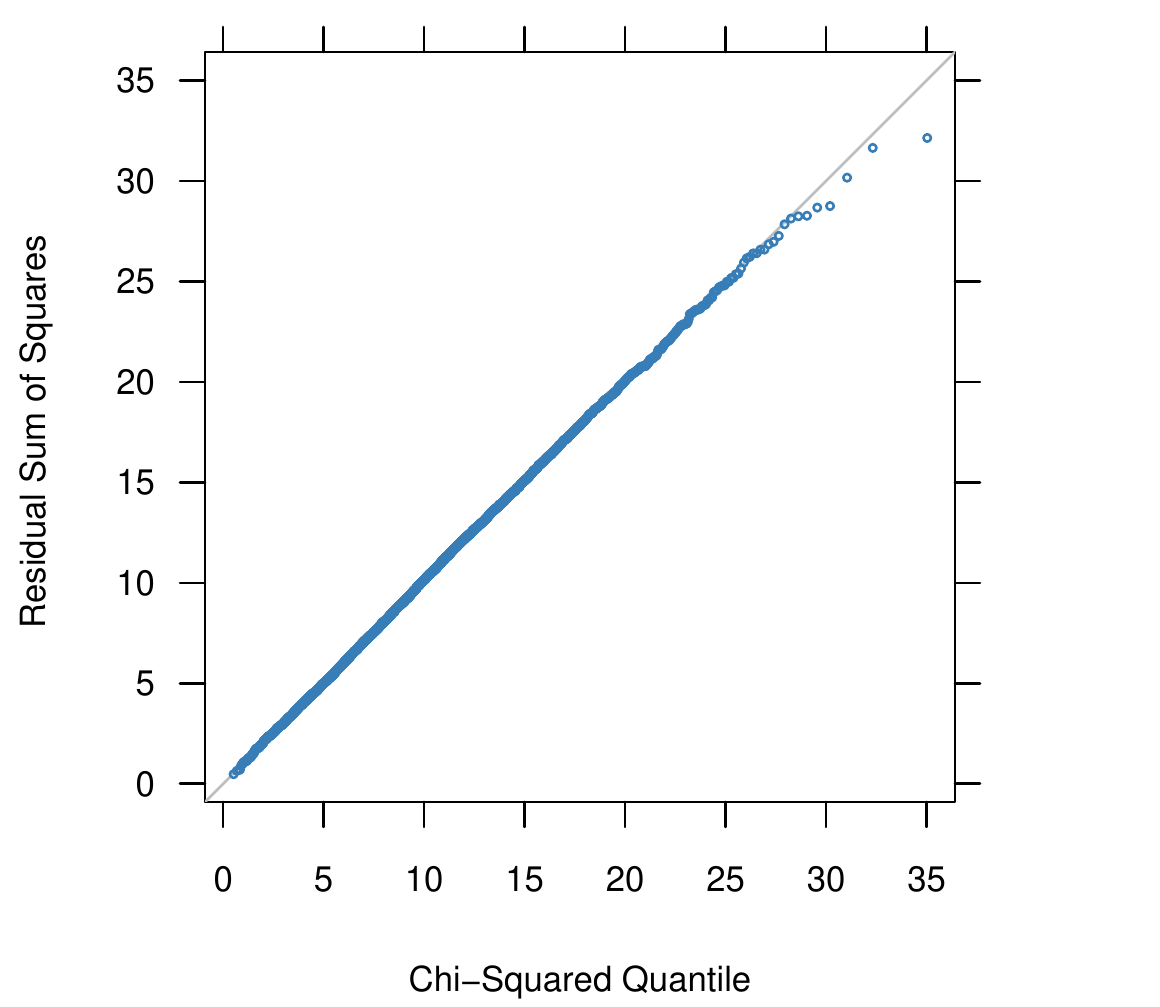}
  \caption{
    There is good agreement between the chi squared distribution with the
    degrees of freedom predicted by Theorem~\ref{T:noise-case}.
  }
  \label{fig:sim-noise-qq}
\end{figure}

We each value of $m$ and $n$, we computed a Kolmogorov-Smirnoff statistic for
testing the hypothesis that the residual sum of squares follows a $\chi^2$
distribution with degrees of freedom predicted by Theorem~\ref{T:noise-case}.  The
$p$-values from the test are large whenever $n$ is above $50$ and $m$ is above
$500$.  Table~\ref{tab:noise-ks-pval} shows the results.

\begin{table}
\caption{Kolmogorov-Smirnoff $p$-value for $\chi^2$ goodness of fit (Noise)}
\label{tab:noise-ks-pval}
\begin{tabular}{ccccccccc}
\toprule
& \multicolumn{8}{c}{Columns ($m$)} \\
\cmidrule(l){2-9}
Rows ($n$) & 5 & 10 & 50 & 100 & 500 & 1000 & 5000 & 10000  \\
\midrule
5 & 0.00 & 0.00 & 0.00 & 0.00 & 0.00 & 0.00 & 0.53 & 0.39  \\
10 & 0.00 & 0.00 & 0.00 & 0.00 & 0.07 & 0.09 & 0.24 & 0.49  \\
50 & 0.00 & 0.00 & 0.00 & 0.01 & 0.89 & 0.78 & 0.30 & 0.28  \\
100 & 0.00 & 0.00 & 0.01 & 0.15 & 0.84 & 0.57 & 0.49 & 0.63  \\
\bottomrule
\end{tabular}

\end{table}

We computed analogous Kolmogorov-Smirnoff goodness of fit $p$-values for the
signal simulations.  Specifically, for each value of $m$, $n$, $\mu$, and for
each choice of the signal direction, we compared the distribution of the
residual sum of squares from the $10000$ replicates with the $\chi^2$
distribution having degrees of freedom predicted by
Theorem~\ref{Thm:signal}.  Appendix~A (\ref{supp}) contains the result tables
analogous to Table~\ref{tab:noise-ks-pval}.  As with the noise case, in most signal settings, regardless of the
signal strength, we see large $p$-values whenever $n$ is above $50$ and $m$ is
above $500$.  However, when the test direction is parallel to the signal
direction (the ``Signal: Basis'' case) the $p$-value from the
Kolmogorov-Smirnoff test is always below $0.01$.  This suggests that the
$\chi^2$ distribution is a poor fit when the test direction is parallel or
highly correlated with the signal direction.  In other situations with
moderate $n$ and $m$, our simulations show good agreement with the $\chi^2$
distribution.

\section{Estimating Degrees of Freedom in Applications: A Conservative Estimator}\label{S:df-estimate}

The main result of Section~\ref{S:df} is that the asymptotic
degrees of freedom associated with the $k$\textsuperscript{th} latent factor is
given by
\begin{equation}\label{eqn:dfk-full}
  \df_k(\vs)
  =
  \begin{cases}
    n
    \Big( 1 - \frac{m \sigma^2}{n \mu_k} - \frac{m \sigma^4}{n \mu_k^2} \Big)
    \frac{(\vv_k^\trans \vs)^2}{\vs^\trans \vs}
    +
    \Big( 1 + \frac{\sigma^2}{\mu_k} \Big)^2
    \Big( 1 - \frac{(\vv_k^\trans \vs)^2}{\vs^\trans \vs} \Big)
      &\text{if $\mu_k > \sigma^2 \sqrt{m/n}$,} \\
    (1 + \sqrt{n/m})^2 - n \frac{\mu_k}{\sigma^2} \frac{(\vv_k^\trans \vs)^2}{\vs^\trans \vs}
      &\text{otherwise.}
  \end{cases}
\end{equation}
This result, while theoretically interesting, is not directly applicable to
data analysis.  For practical purposes, we need an estimate of $\df(\vs)$
which does not depend on unknown quantities.  

A plug-in estimator (replacing population quantities $\mu_k$, $\sigma^2$, and
$\vv_k^\trans \vs$ with the corresponding sample-based quantities) is likely
to under-estimate $\df_k(\vs)$ since, almost surely, $\hat \mu_k > \mu_k$ and
$(\vhv_k^\trans \vv_k)^2 < 1$.  Under-estimating $\df_k(\vs)$ leads to smaller
estimates of $\sigma^2(\vs)$, which in turn leads to higher $t$-statistics and
more false discoveries.

We propose a conservative estimator for $\df(\vs)$.  
First, from~\eqref{eqn:dfk-full}, we have the upper bound
\[
  \df_k(\vs)
    \leq
      n \Big( 1 - \frac{m \sigma^2}{n \mu_k} - \frac{m \sigma^4}{n \mu_k^2} \Big)
      \frac{(\vv_k^\trans \vs)^2}{\vs^\trans \vs}
      +
      (1 + \sqrt{n/m})^2.
\]
Next, we note that
\[
  (\vhv_k^\trans \vs)^2
    \geq \bar \rho_{kk}^2 (\vv_k^\trans \vs)^2 + O_P(n^{-1/2})
    \geq
      \Big( 1 - \frac{m \sigma^4}{n \mu_k^2} \Big) (\vv_k^\trans \vs)^2
      + O_P(n^{-1/2}).
\]
Therefore, the estimator
\begin{equation}\label{eqn:dfhat}
  \widehat{\df}_k(\vs)
  =
  n \frac{(\vhv_k^\trans \vs)^2}{\vs^\trans \vs}
  +
  (1 + \sqrt{n/m})^2
\end{equation}
is asymptotically greater than $\df_k(\vs)$.

Even though the estimator is conservative, the difference
\(
  \widehat{\df}_k(\vs) - \df_k(\vs)
\)
is small in regimes of practical interest, when $\mu_k$ is well
above the phase transition, \emph{i.e.}, $\mu_k \gg \sigma^2 \sqrt{m/n}$).

\section{Degrees of Freedom correction to the AGEMAP study}\label{S:AGEMAP}

The analysis of the AGEMAP dataset conducted by \citet{Zahn:07} was criticized
for its low power in light of the fact that it did not find statistically
significant evidence of many age-related genes in the cortical tissues,
despite ``extensive other evidence on the susceptibility to aging'' of those
tissues \citep{Land:08}.  Indeed, without any adjustment for latent factors,
we find only 19 out of 17864 genes to be significantly age-related at level
$0.001$, roughly the same number to be expected by chance under the null
hypothesis of no age-related genes.  In this section, we perform an
analysis that adjusts for latent factors, and we show that this leads to many
more significant findings.

We fit the bilinear model to the AGEMAP
dataset described in Section~\ref{S:introduction}.  For each gene, our goal is
to assess the relationship between log activation and age after adjusting for
observed and latent subject-specific covariates.  For gene~$j$, we take
test direction $\vs_j = (\mI - \mH_{\mZ}) \ve_j$, where $\ve_j \in \reals^m$
denotes the $j$\textsuperscript{th} basis vector.  Using a bilinear model to adjust for
observed and latent subject-specific covariates, we perform a test on
$[\mB^\trans \vs_j]_3$, the identifiable component of the age coefficient for
gene $j$.

We first regress gene response on the observed covariates (subject age and
sex; gene tissue type).  An investigation of the residuals from this bilinear
multiple regression fit reveals that two latent factors explain 51.3\% of
the residual variance (Table~\ref{T:agemap-scree}).  After adding these two
estimated latent factors to the regression model, there is no obvious
low-dimensional structure in the residuals.

\begin{table}[h]
\centering

\caption{\textsc{First Two Factors Explains Most of Residual Variance.} Residual
variance explained by each principal component.  A large proportion of the
total variance is explained by the first two components.}
\label{T:agemap-scree}

\vspace{\baselineskip}

\footnotesize
\hfill
\begin{tabular}{lrr}
\toprule
& & \multicolumn{1}{l}{Resid.}\\
\multicolumn{1}{l}{Factor} & \multicolumn{1}{c}{Var.~\%} & \multicolumn{1}{l}{Var.~\%}\\
\midrule
1 & 37.1 \phantom{x}& 62.9 \phantom{}\\
2 & 14.2 \phantom{x}& 48.7 \phantom{}\\
3 &  5.8 \phantom{x}& 42.9 \phantom{}\\
4 &  4.3 \phantom{x}& 38.7 \phantom{}\\
5 &  3.7 \phantom{x}& 34.9 \phantom{}\\
6 &  3.4 \phantom{x}& 31.5 \phantom{}\\
7 &  2.8 \phantom{x}& 28.6 \phantom{}\\
8 &  2.2 \phantom{x}& 26.5 \phantom{}\\
9 &  2.0 \phantom{x}& 24.4 \phantom{}\\
10 &  1.9 \phantom{x}& 22.5 \phantom{}\\
11 &  1.8 \phantom{x}& 20.7 \phantom{}\\
12 &  1.5 \phantom{x}& 19.2 \phantom{}\\
\bottomrule
\end{tabular}
\hfill\hfill
\begin{tabular}{lrr}
\toprule
& & \multicolumn{1}{l}{Resid.}\\
\multicolumn{1}{l}{Factor} & \multicolumn{1}{c}{Var.~\%} & \multicolumn{1}{l}{Var.~\%}\\
\midrule
13 &  1.4 \phantom{x}& 17.8 \phantom{}\\
14 &  1.2 \phantom{x}& 16.6 \phantom{}\\
15 &  1.2 \phantom{x}& 15.5 \phantom{}\\
16 &  1.1 \phantom{x}& 14.3 \phantom{}\\
17 &  1.0 \phantom{x}& 13.3 \phantom{}\\
18 &  1.0 \phantom{x}& 12.3 \phantom{}\\
19 &  0.9 \phantom{x}& 11.4 \phantom{}\\
20 &  0.9 \phantom{x}& 10.5 \phantom{}\\
21 &  0.8 \phantom{x}&  9.6 \phantom{}\\
22 &  0.8 \phantom{x}&  8.8 \phantom{}\\
23 &  0.8 \phantom{x}&  8.0 \phantom{}\\
24 &  0.8 \phantom{x}&  7.3 \phantom{}\\
\bottomrule
\end{tabular}
\hfill\hfill
\begin{tabular}{lrr}
\toprule
& & \multicolumn{1}{l}{Resid.}\\
\multicolumn{1}{l}{Factor} & \multicolumn{1}{c}{Var.~\%} & \multicolumn{1}{l}{Var.~\%}\\
\midrule
25 &  0.7 \phantom{x}&  6.5 \phantom{}\\
26 &  0.7 \phantom{x}&  5.8 \phantom{}\\
27 &  0.7 \phantom{x}&  5.1 \phantom{}\\
28 &  0.7 \phantom{x}&  4.5 \phantom{}\\
29 &  0.6 \phantom{x}&  3.8 \phantom{}\\
30 &  0.6 \phantom{x}&  3.2 \phantom{}\\
31 &  0.6 \phantom{x}&  2.6 \phantom{}\\
32 &  0.6 \phantom{x}&  2.1 \phantom{}\\
33 &  0.6 \phantom{x}&  1.5 \phantom{}\\
34 &  0.5 \phantom{x}&  1.0 \phantom{}\\
35 &  0.5 \phantom{x}&  0.5 \phantom{}\\
36 &  0.5 \phantom{x}&  0.0 \phantom{}\\
\bottomrule
\end{tabular}
\hfill

\end{table}

We obtain conservative degree of freedom estimate $\widehat{\df}_k(\vs_j)$ for the
$\hat r = 2$ estimated latent factors using the estimator~\eqref{eqn:dfhat}.
Figure~\ref{F:agemap-df-comparison} shows these estimates; we can see that
Gollob's method and Mandel's method are both more liberal than our proposed
method.  Our method assigns between $1.1$ and $1.6$ degrees of freedom for
each latent factor, depending on the gene.  The genes with higher assigned
degrees of freedom are the ones with higher loadings for the estimating latent
factors.

\begin{figure}
\centering
\includegraphics[scale=0.7]{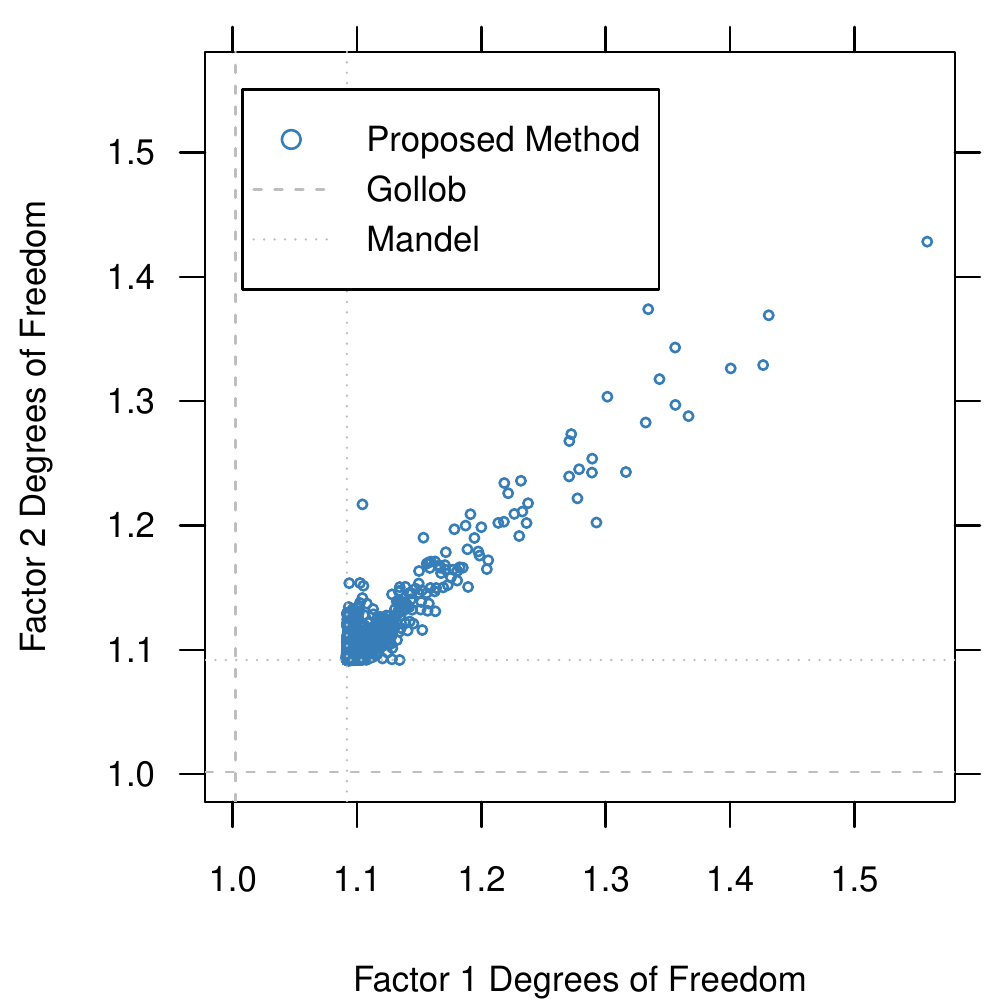}
\caption{\textsc{Degrees of Freedom Method Comparison.}
Gene-specific degrees of freedom estimates for the first two estimated
factors.  Gollob's method and Mandel's method assign the same degrees of
freedom to each gene.  Our proposed method assigns more degrees of freedom to
genes with high factor loadings.
}
\label{F:agemap-df-comparison}
\end{figure}

With the degrees of freedom estimates, we derive a gene-specific error variance
estimate
\[
  \hat \sigma^2 (\vs_j) = \RSS(\vs_j) / \{ n - \widehat{\df}(\vs_j) \},
\]
with
\(
  \widehat{\df}(\vs_j) = \sum_{k=1}^{\hat r} \widehat{\df}_k(\vs_j)
\)
and $n = N - p$.  This, in turn, can be used to compute a test statistic
for $[\mB^\trans \vs_j]_3$.

\begin{figure}
\centering
\includegraphics[scale=0.7]{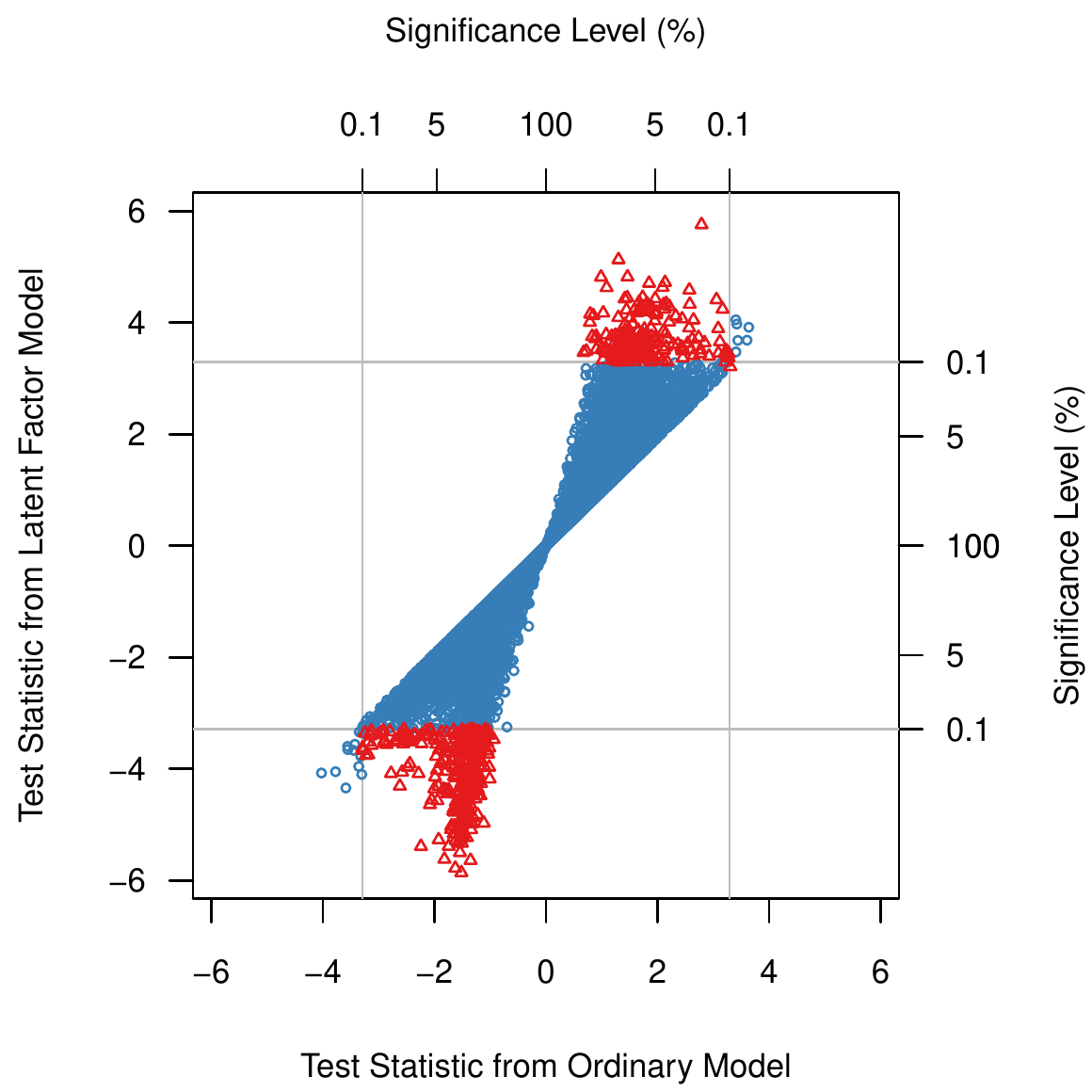}
\caption{\textsc{Latent Factor Model Leads to Different Conclusions.}
Gene-specific regression coefficient $t$ statistics for Age under the ordinary
regression model and the latent factor model with $\hat r = 2$ estimated factors.
There are 496 coefficients which are significant at level~$0.001$ in the
latent factor model but not in the ordinary regression model; there is 1 coefficient
significant at this level in the ordinary regression model but not the latent
factor model.
}
\label{F:agemap-tstat}
\end{figure}

After adjusting for latent factors, there are 514 age coefficients out of
17,864 which are significant at level 0.001.  Without the latent factor
adjustment, we would find only 19 genes to be significant at that level.
Figure~\ref{F:agemap-tstat} shows the test statistics from the model with no
estimated factors ($\hat r = 0$) and the model with ($\hat r = 2$).  For most
genes (85\%), adjusting for latent factors results in a larger test
statistic.  According to our conservative estimate, adjusting for 2 latent
factors uses between $2.184$ and $2.987$ degrees of freedom, depending on the gene.
Contrast this with Gollob's parameter counting scheme, which would assign
$2.004$ degrees of freedom, and Mandel's scheme, which would assign $2.184$
degrees of freedom to all genes.

\subsection{Estimating the False Discovery Rate}

To estimate the false discovery rate, we performed a parametric bootstrap
simulation:

\begin{enumerate}
\item First, we fit a bilinear regression model to the AGEMAP data
using $K = 2$ latent factors.  

\item For the $17350$ genes whose estimated age coefficients
were not significant at level $0.001$, we set the estimates to 0; this left
$514$ nonzero age coefficients.  

\item We simulated $1000$ bootstrap datasets using
the estimated coefficients and latent factors, with a diagonal covariance
matrix for gene-specific regression errors with variances estimated from the
data.  

\item For each bootstrap dataset, we re-fit the model.  We computed the number
of declared significant genes at nominal level $0.001$, using all four degree
of freedom correction methods.  We also fit a model without estimating any
latent factors.

\item We average the false discovery rate (FDR), level/false positive rate
(FDR), and power/true positive rate (TPR).

\end{enumerate}
Table~\ref{tab:fdr-table} summarizes the results.

\begin{table}\label{tab:fdr-table}
\begin{tabular}{lrrrr}
Correction & FDR (\%) & Level/FPR (\%) & Power/TPR (\%) \\
\hline
Proposed Method
       &   4.25 &  0.10 &  74.27 \\
Gollob &   4.27 &  0.10 &  74.32 \\
Mandel &   4.25 &  0.10 &  74.28 \\
Naive  &   4.27 &  0.10 &  74.32 \\
None   &  16.25 &  0.03 &   5.55
\end{tabular}
\caption{Estimated False Discovery Rate (FDR), False Positive Rate (FPR), and
True Positive Rate (TPR) from AGEMAP bootstrap simulation.
The FDR for the ``None''
method had a standard error of $0.2\%$; all other standard errors were below
$0.06\%$.}
\end{table}

We can see that not adjusting for the latent factors results in a lower power
and a higher false discovery rate. All of the other methods give
similar results.

\section{Discussion}\label{S:summary}
Motivated by the AGEMAP study, we have shown how to adjust for latent sources of variability in multivariate regression problems by proposing a simple degrees of freedom assignment for
estimated latent factors.  Our methodology gives a principled alternative to
\emph{ad-hoc} approaches in common use.  We have thus bridged the gap between
theory and practice in this context by proposing a conservative estimate for
the degrees of freedom.
Although our estimator is conservative, it is close to the exact theoretical
value in regimes of common interest, with many responses and strong latent
signals.  Moreover, it is quite simple to apply, and thus ideal for routine
use.

In order to gain theoretical insights, we have made two main
simplifying assumptions.  First, we have assumed that the regression errors
are normally-distributed.  Second, we have assumed that the noise covariance
is a multiple of the identity.  In light of many universality results in
random matrix theory \citep{pillai2011, benaych2011eigenvalues}, the first
assumption (normality) can likely be weakened.  The second assumption is
harder to tackle analytically, but we believe our results hold as long as the
eigenvalues of the error covariance matrix are small relative to the latent
signal strength.  A rigorous analysis of the extent to which this assumption
can be weakened is an area for further research.

\appendix

\section*{Appendix}
\begin{proof}[Proof of Lemma \ref{lem:RSS}]
By construction,
\[
  \mhE^\trans \mhE
  = (\mY - \mhY)^\trans(\mY - \mhY).
  \]
Since the factors were
estimated from the singular value decomposition of $\mY$, they are orthogonal
to the residual matrix.  That is, $\mhE^\trans \mhU = \mzero$ and
$\mhE \mhV^\trans = \mzero$ and hence $\mhE^\trans \mhY = 0$.  Thus,
\begin{align*}
  \mhE^\trans \mhE
  &=  \mY \mY^\trans - 2\mY^\trans \mhY + \mhY^\trans \mhY\\
  &=  \mY \mY^\trans - 2(\mhY + \mhE)^\trans \mhY + \mhY^\trans \mhY \\
  &=  \mY \mY^\trans - \mhY^\trans \mhY \\
  &=   \mY \mY^\trans - n \mhV \mhD^2 \mhV^\trans
\end{align*}
and
\begin{align}
\RSS(\vs)
    &\equiv \vs^\trans  \mY \mY^\trans \vs - n \,\vs^\trans \mhV \mhD^2 \mhV^\trans \vs.
 \end{align}
 Now the result follows from expanding the terms, and using the identity
 \[
 \vs^\trans \mV \mD^2 \mV \vs = \sum_{k=1}^{r}
      \mu_k \cdot (\vv_k^\trans \vs)^2
 \]
 along with an analogous expansion for $\vs^\trans \mhV \mhD^2 \mhV^\trans \vs$.
\end{proof}

\begin{proof}[Proof of Lemma~\ref{lem:decomp}]
Suppose that $\mY = \sqrt{n} \, \mU \mD \mV^\trans + \mE$, where the rows of
$\mE$ are independent mean-zero multivariate normal random vectors with
covariance matrix $\mSigma = \sigma^2 \mI$.  Let $\mY = \sqrt{n} \, \mhU  \mhD
\mhV^\trans$ be a (scaled) singular value decomposition of $\mY$.  Set
$\matrixsymbol{\mathcal{V}}_1 = \mV$ and choose $\matrixsymbol{\mathcal{V}}_2$
such that
\(
  \matrixsymbol{\mathcal{V}}
  =
  [
  \begin{matrix}
    \matrixsymbol{\mathcal{V}}_1 &
    \matrixsymbol{\mathcal{V}}_2
  \end{matrix}
  ]
\)
is an orthogonal matrix.  The matrix $\mhV$ can be decomposed as
\(
  \mhV
  =
  \matrixsymbol{\mathcal{V}}_1 \mhV_1
  +
  \matrixsymbol{\mathcal{V}}_2 \mhV_2,
\)
where $\mhV_l = \matrixsymbol{\mathcal{V}}_l^\trans \mhV$ for $l = 1, 2$.
The claim will follow if we show that the distribution of $\mhV_2$ is
invariant under multiplication on the left by any orthogonal matrix, \emph{i.e.},
if $\mO \mhV_2 \eqd \mhV_2$ for every orthogonal
$\mO$.

 Set $\mE_l = \mE \matrixsymbol{\mathcal{V}}_l^\trans$ for $l = 1, 2$.
Note that $\mE_2 \mO^\trans \eqd \mE_2$.
Write
\(
  \mY
  = 
  (\sqrt{n} \, \mU \mD + \mE_1)
  \matrixsymbol{\mathcal{V}}_1^\trans
  +
  \mE_2
  \matrixsymbol{\mathcal{V}}_2^\trans.
\)
Set
\(
  \mY'
  =
  (\sqrt{n} \, \mU \mD + \mE_1)
  \matrixsymbol{\mathcal{V}}_1^\trans
  +
  \mE_2 \mO^\trans
  \matrixsymbol{\mathcal{V}}_2^\trans
\)
and let
\(
  \mY'
  =
  \sqrt{n} \,
  \mhU'
  \mhD'
  \mhV'^\trans
\)
be the singular value decomposition of $\mY'$.  Since $\mY' \eqd \mY$,
it must follow that
\(
  \matrixsymbol{\mathcal{V}}_2^\trans \mhV'
  \eqd
  \mhV_2.
\)
In fact,
\(
  \mhV'
  =
  \matrixsymbol{\mathcal{V}}_1 \mhV_1
  +
  \matrixsymbol{\mathcal{V}}_2 \mO \mhV_2
\)
by construction since
\(
  \mY' 
  (
  \matrixsymbol{\mathcal{V}}_1 \mhV_1
  +
  \matrixsymbol{\mathcal{V}}_2 \mO \mhV_2
  )
  =
  \mY
  \mhV.
\)
Therefore,
\(
  \matrixsymbol{\mathcal{V}}_2^\trans \mhV = \mO \mhV_2
\)
and thus
\(
  \mO \mhV_2 \eqd \mhV_2,
\)
finishing the proof.
\end{proof}

\section*{Acknowledgements}
The authors thank Art Owen for suggesting the research problem and Paul
Bourgade for helpful discussions.  Natesh Pillai was partially supported by 
 National Science Foundation under the grant DMS~1107070.

\bibliographystyle{imsart-nameyear}

\bibliography{refs}

\includepdf[pages={1-5}]{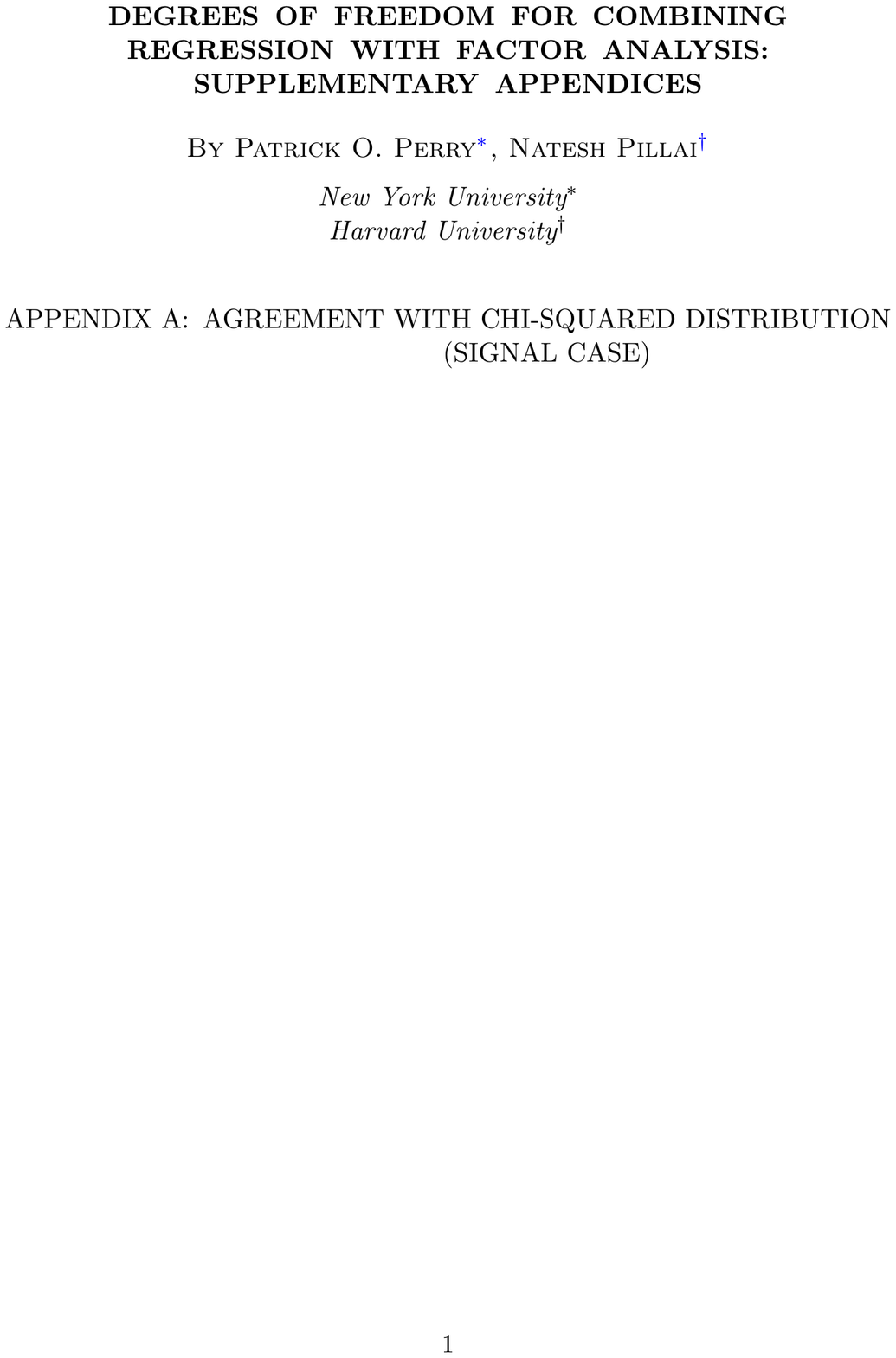} 



\end{document}